\def \diag {\, \mbox{diag}}
\def \ad {\, \mbox{ad}}
\def \psh {{\Omega}}
\def\bbbz{{\mathbb Z}}
\def\bbbd{{\mathbb D}}
\def\cp1{{\mathbb C\mathbb P}^1}
\def\cU{{\cal U}}
\def\cW{{\cal W}}
\def\cM{{\cal M}}
\def\cO{{\cal O}}
\def\cF{{\cal F}}
\def\cS{{\cal S}}
\def\cT{{\cal T}}
\def\T{{T}}
\def\bu{{\mathbf u}}
\def\res{{\mbox{\rm res\,}}}
\def\bpi{{\bf \pi}}
\newtheorem{Def}{Definition}
\newtheorem{Thm}{Theorem}
\newtheorem{Pro}{Proposition}
\newtheorem{Lem}{Lemma}
\newtheorem{Cor}{Corollary}
\def\acS{{\, \xrightarrow{\cS}\, }}
\begin{document}
\title{Darboux transformation with Dihedral reduction group}
\author{ Alexander V. Mikhailov$^{\T}$, Georgios Papamikos$^{\dagger}$ and Jing Ping Wang$ ^\dagger $
\\
$\dagger$ School of Mathematics, Statistics \& Actuarial Science, University of Kent, UK \\
$\T$ Applied Mathematics Department, University of Leeds, UK
}
\date{}
\maketitle
\begin{abstract}
We construct the Darboux transformation with Dihedral reduction group for the $2$--dimensional generalisation of the periodic Volterra
lattice.
The resulting B{\"a}cklund transformation can be viewed as a nonevolutionary integrable differential difference equation. We also find 
its generalised symmetry and the Lax representation for this symmetry. Using formal diagonalisation of the Darboux matrix we obtain local conservation laws
of the system.
\end{abstract}

\section{Introduction}

The connection among integrable partial differential
equations, 
Darboux transformations of the corresponding Lax operators and partial difference 
equations and the corresponding B\"acklund transformations is well established (see, for instance, 
\cite{mat91, mr1908706, BobSuris, kmw13}).
Namely, the Darboux 
transformations serve as Lax representations for integrable difference equations, 
while the B\"acklund transformations are symmetries of these difference equations 
and are integrable differential-difference equations in their own right. 

The motivation of this research is to extend the reduction 
groups \cite{mik79,mik80,mik81} of Lax representations of integrable partial differential
equations to integrable difference equations via Darboux transformations, and ultimately to describe all elementary Darboux
transformations corresponding to affine Lie algebras and automorphic Lie
algebras with finite reduction groups. Recently, the authors of \cite{smx, krm13} have completed a comprehensive study for 
the Lax operators of the nonlinear Schr{\"o}dinger equation type and they derived some new discrete equations and new maps.
Here we go beyond $sl_2$ Lax representations and take outer automorphisms into account.

The concept of reduction groups for Lax representations of partial differential equations was first introduced in \cite{mik79,mik80,mik81}.
Later, a new class of infinite dimensional quasi-graded Lie algebra called automorphic Lie algebras based on a reduction group was studied in \cite{LM05}.
The reduction groups have been used to classify  Lax representations and their corresponding integrable equations \cite{LM04, bury}.
The study of automorphic Lie algebras has become a research topic in its own right \cite{ls10,kls14}.

In this paper, we construct the Darboux transformation for the $2$--dimensional generalisation
of the Volterra lattice \cite{mik79}:
\begin{eqnarray}\label{2+1}
\left\{ \begin{array}{l}\phi^{(i)}_{t}= \theta^{(i)}_{x} + \theta^{(i)} \phi^{(i)}_{x}
-e^{2 \phi^{(i-1)}}+ e^{2 \phi^{(i+1)}}, \\
\theta^{(i+1)}-\theta^{(i)}+ \phi^{(i+1)}_{x}+\phi^{(i)}_{x}=0 , \end{array}\right.\quad
\phi^{(i+n)}=\phi^{(i)}, \quad \theta^{(i+n)}=\theta^{(i)},\quad \sum_{i=1}^n \phi^{(i)}=\sum_{i=1}^n\theta^{(i)}=0 .
\end{eqnarray}
This system can be viewed as a discretisation of
the Kadomtsev-Petviashvili equation. Indeed, in the limit $n\to \infty$,
\[\phi^{(i)}(x,t)=h^2 u(\xi,\eta,\tau),\quad h=n^{-1}, \]
\[ \tau=h^3 t, \quad \xi=i h+4ht,\quad \eta=h^2 x, \]
 system (\ref{2+1}) goes to
\[ u_\tau=\frac{2}{3}u_{\xi\xi\xi}+8uu_\xi-2D_\xi^{-1}u_{\eta\eta}+O(h^2).\]
The exact solutions of system (\ref{2+1}) have much in common with $2+1$-dimensional integrable equations, which have recently been investigated by
Bury and Mikhailov \cite{bury,bm}. For fixed period $n$, it is a bi-Hamiltonian system.
When $n=3$, its recursion operator and bi-Hamiltonian structure are explicitly constructed from its Lax representation in 
\cite{wang09}.

The Lax representation of (\ref{2+1}) is invariant under the  the dihedral reduction group $\bbbd_n$ \cite{mik79,mik81,LM05, LM04,bury}
with both inner and outer Lie algebra automorphisms.  To the best of our knowledge, this is the first example in which symmetries of a Darboux transformation are also generated by
outer automorphisms. We rigorously prove that a B{\"a}cklund transformation for the $2$--dimensional generalisation of the Volterra lattice (\ref{2+1}) is
\begin{eqnarray}
(\cS\!\!-\!\!1) \phi_x^{(k)}=-\frac{n}{\mu^2} (\psh-1)^2 \left(a^{(k-1)} b^{(k)} p^{(k-1)}\exp(\phi^{(k-1)})\right),\label{flowx0}
\end{eqnarray} 
where $\cS$ is a shift operator and $\psh$ is an $n$ periodic shift operator acting on the upper index, and $a^{(k)}$, $ b^{(k)}$ and  $p^{(k)}$ are defined
in terms of $\phi^{(i)}$, $i,k=1,\cdots, n-1$ by
\begin{eqnarray*}
&& p^{(i)}=\mu \exp\{(\sum_{k=i}^{n-1}-\sum_{k=1}^{n-1}\frac{k}{n}) (\phi_1^{(k)}-\phi^{(k)})\};\\
&&{a^{(i)}}^2=\frac{({p^{(i)}}^2-1) \mu^{2(n-i)} \prod_{l=1}^{i-1} {p^{(l)}}^2 } {\mu^{2 n}-\prod_{l=1}^{n-1} {p^{(l)}}^2 };\qquad
b^{(i)}=\frac{ a^{(i)} \prod_{l=i}^{n-1} {p^{(l)}}^2 ({p^{(n)}}^2-1)}{n \mu^{2(n-i)-1} p^{(i)} }.
\end{eqnarray*}
Moreover, for the nonevolutionary integrable differential difference equation (\ref{flowx0}), we find its generalise symmetry
$$\phi_{\tau}^{(k)}=-\frac{n }{\mu <b,\ a_{-1}>} (\psh-1) \left( b^{(k)} a_{-1}^{(k)} \right),$$
where $<b,\ a_{-1}>=\sum_{i=1}^{n} b^{(i)}a_{-1}^{(i)}$ 
and conserved densities
\begin{eqnarray*}
&& \rho_0=\log <b,\ a_{-1}> ; \\
&& \rho_1=-\frac{1}{\mu <b_1 ,\,a><b,\, a_{-1}> }\sum_{k,i=1}^n
\Gamma_{k-q}b_1^{(k)}a^{(k)}b^{(i)}a_{-1}^{(i)}, \quad \Gamma_l=(l-1)\!\!\! \mod\! n-\frac{n-1}{2}.  
\end{eqnarray*}

The arrangement of this paper is as follows: In Section \ref{Sec21},  we give basic definitions related to the Lax--Darboux scheme such as Darboux transformation, Darboux map and
B{\"a}cklund transformation. Meanwhile, we also fix notations. In  Section \ref{Sec22}, we parametrise the rank $1$ matrix with simple poles invariant
under the dihedral group $\bbbd_n$.
Then in Section \ref{Sec23} we provide a rigorous proof for the rank $1$ Darboux transformation of the $2$--dimensional generalisation
of Volterra lattice (\ref{2+1}) and give the corresponding B{\"a}cklund transformation, which can be viewed as an integrable nonevolutionary differential difference
equation. In Section \ref{Sec3}, we construct a local generalised symmetry for this nonevolutionary equation using the Darboux matrix. We also provide 
the Lax representation for this symmetry flow viewed as an evolutionary integrable equation. 
In Section \ref{Sec4}, we transform the Lax pair for this symmetry flow into a formal block-diagonal form to get its conservation laws. 
The nonevolutionary integrable equation and its symmetry share the same conserved densities. Finally, we give a brief conclusion on what we have done in this paper and 
a short discussion about open problems. To improve the readability, we put the technical results required in the Appendix.

\section{Generalised Volterra lattice and related structures}\label{Sec21}
In this section, we discuss the known Lax representation for the $2$--dimensional generalisation
of the Volterra lattice (\ref{2+1}) and its reduction group.  Meanwhile we fix notations and introduce the definitions related to the
Lax--Darboux scheme such as a Darboux transformation, Darboux map and B{\"a}cklund transformation.

Consider the following linear differential operators with matrix coefficients \cite{mik79,mik81}
\begin{eqnarray}\label{laxva}
\begin{array}{ll}
   ×L^{(1)}=D_{t_1}-V^{(1)},& V^{(1)}= \lambda\mathbf{u}
\Delta-\lambda^{-1}\Delta^{-1} \mathbf{u},\\
L^{(2)}=D_{t_2}-V^{(2)},& V^{(2)}= \lambda
^2\mathbf{u}\Delta\mathbf{u}\Delta+\lambda\mathbf{a}
\Delta-\lambda^{-1}
\Delta^{-1}\mathbf{a}-\lambda^{-2}  \Delta^{-1} \mathbf{u}\Delta^{-1}\mathbf{u} ,
  \end{array}
\end{eqnarray}
where $t_1=x,\ t_2=t$,   $\mathbf{u}=\mbox{diag}(\exp (\phi^{(i)})),$ $\mathbf{a}=\diag
(\theta_i \exp ( \phi^{(i)}))$ are diagonal matrices and $\Delta_{ij}=\delta_{i,j-1}$
whose indexes are count modulo $n$, i.e., 
\begin{equation}
 \label{Delta}
 \Delta =\begin{pmatrix}
0& 1 & 0 & \dots & 0 \\
0 & 0 & 1 & \dots & 0 \\
0& 0& 0& \dots&1 \\
1 & 0&0 &  \dots& 0
\end{pmatrix}.
\end{equation}
This convention to count indexes modulo $n$ will be used throughout the whole paper, thus $\phi^{(i+n)}=\phi^{(i)}$ and $\theta^{(i+n)}=\theta^{(i)}$.
We introduce an $n$ periodic shift operator as follows:
$$\psh(\phi^{(i)})=\phi^{(i+1)} \quad \mbox{for $i=1, \cdots, n-1$ and}\quad \psh(\phi^{(n)})=\phi^{(1)}.$$

We set $\sum_{i=1}^n \phi^{(i)}=\sum_{i=1}^n\theta^{(i)}=0$ without losing generality.
The condition of commutativity of these operators
\begin{equation}\label{zeroc}
 [L^{(1)},L^{(2)}]=D_{t}(V^{(1)})-D_x(V^{(2)})+[V^{(1)},V^{(2)}]=0
\end{equation}
leads to the $2$--dimensional generalisation of the Volterra lattice (\ref{2+1}). 
This is often called a zero curvature representation or Lax representation of equation (\ref{2+1}).
These two operators, $ L^{(1)}$ and $L^{(2)}$, are conventionally called the Lax pair.
The commutativity of operators  can be seen as a compatibility 
 condition for the linear problems
\begin{equation}\label{lax0}
 D_x(\Psi)=V^{(1)}(\bu; \lambda)\Psi,\qquad D_{t}(\Psi)=V^{(2)}(\bu; \lambda)\Psi,
\end{equation}
i.e. the condition for the existence of a common fundamental solution  $\Psi$ ($\det \Psi\ne 0$).

The operators $L^{(i)}$ are invariant under the group of automorphisms generated by following two transformations:
\begin{equation}\label{redgroup}
s: L^{(i)}(\lambda)\mapsto Q L^{(i)}(\lambda \omega) Q^{-1}\quad \mbox{and} \quad r:
L^{(i)}(\lambda)\mapsto -L^{(i)\T}(\lambda^{-1}),\qquad Q=\diag (\omega^i),\
\omega=\exp\frac{2\pi {\rm i}}{n} .
\end{equation}
These two transformations satisfy
$$s^n = r^2 = id,\quad rsr = s^{-1} $$
and therefore generate the dihedral group $\bbbd_n$ \cite{mik79,mik81,LM05,LM04, bury}.
Note that the transformation $r$ is an outer automorphism of the Lie algebra $sl(n)$ over the Laurent polynomial ring $\mathbb{C}[\lambda,\lambda^{-1}]$.

A Darboux transformation is a linear map acting on a fundamental solution 
\begin{equation}\label{SPsi}
\Psi\mapsto \overline{\Psi}=M_{\mu}\Psi,\qquad \det\,M_{\mu}\ne 0
\end{equation}
such that the matrix function $\overline{\Psi}$ is a fundamental solution of the linear problems 
\begin{equation}\label{lax1}
 D_x(\overline{\Psi})=V^{(1)}(\overline{\bu}; \lambda)\overline{\Psi},\qquad D_{t}(\overline{\Psi})=V^{(2)}(\overline{\bu}; \lambda)\overline{\Psi}
\end{equation}
with new ``potentials''  $\overline{\bu}$. 
The matrix $M_{\mu}$ is often called the Darboux matrix.
From the compatibility of (\ref{SPsi}) and (\ref{lax1}) it follows that 
\begin{eqnarray}
&& D_{t_i}(M_{\mu})=V^{(i)}(\overline{\bu}; \lambda)M_{\mu}-M_{\mu} V^{(i)}(\bu; \lambda), \quad i=1,2.\label{backlundx}
\end{eqnarray}
Equations (\ref{backlundx}) are differential equations which relate two solutions  $\bu$ 
and $\bar\bu$ of  (\ref{2+1}). In the literature they are also often called {\sl B\"acklund transformations}.

A Darboux transformation maps one compatible system (\ref{lax0}) into another one (\ref{lax1}). It defines a 
Darboux map $\cS: \bu\mapsto\overline{\bu}$.
The map (\ref{SPsi}) is invertible ($\det\,M_{\mu}\ne 0$) and it can be iterated
\[
  \cdots\underline{\Psi}\acS \Psi\acS\overline{\Psi}
  \acS\overline{\overline{\Psi}}\acS\cdots\, .
\]
It suggests notations
$$
 \ldots \Psi_{-1}=\underline{\Psi},\ \Psi_0=\Psi,\ \Psi_1=\overline{\Psi},\ \Psi_2=\overline{\overline{\Psi}}, \ldots,
$$
$$ \ldots \bu_{-1}=\underline{\bu},\ \bu_0=\bu,\ \bu_1=\overline{\bu},\ \bu_2=\overline{\overline{\bu}}, \ldots\, .$$
With a vertex $k$  of the one dimensional lattice $\bbbz$ we associate variables $\Psi_k$ and $ \bu_k$; 
with the edges joining vertices $k$ and $k+1$  we associate the matrix $\cS^k (M_{\mu})$. In these notations
the Darboux maps $\cS$ and $\cS^{-1}$ increase and decrease the subscript index by one, and therefore we shall call it  a $\cS$--shift, or shift operator $\cS$.
In what follows we often shall omit zero in the subscript index and write  $\bu$ instead of $\bu_0$. The map $\cS$ is an automorphism of the Lax structure (\ref{lax0})
and a discrete symmetry of system (\ref{2+1}) .

In these notations the resulting B\"acklund transformations from (\ref{backlundx}) are integrable differential difference equations
and (\ref{backlundx}) are their Lax-Darboux representations. 

A Darboux transformation with matrix $M_\mu$ with a parameter $\mu$ results in the $\cS$ shift. If we also consider a Darboux transformation $M_\nu$ 
with a different choice of the parameter, then the corresponding shift (automorphism of the Lax structure (\ref{lax0})) we denote $\cT$. The compatibility 
condition similar to (\ref{backlundx}) reads
\begin{eqnarray}
&& D_{t_i}(M_{\nu})=\cT(V^{(i)})M_{\nu}-M_{\nu} V^{(i)}, \quad i=1,2.\label{backlundt}.
\end{eqnarray}
These shifts act on $\bbbz^2$ lattice and with the vertex $(n,m)$ we
associate variables $\bu_{n,m}$, so that $\cS (\bu_{n,m})=\bu_{n+1,m},\ \cT (\bu_{n,m})=\bu_{n,m+1}$. Commutativity of the shifts $\cS$ and $\cT$ leads to a 
system of partial-difference equations 
\[
 \cT(M_\mu)M_\nu-\cS(M_\nu)M_\mu=0.
\]
Differential difference equations (\ref{backlundx}), (\ref{backlundt}) are the symmetries of this partial-difference equation. Indeed
\[
 D_{t_i}(\cT(M_\mu)M_\nu-\cS(M_\nu)M_\mu)=\cT\cS(V^{(i)})(\cT(M_\mu)M_\nu-\cS(M_\nu)M_\mu)-(\cT(M_\mu)M_\nu-\cS(M_\nu)M_\mu)V^{(i)}=0.
\]

The aim of this paper is to construct a Darboux matrix $M_\mu$ for the operators
$L^{(1)}$ and the corresponding integrable B{\"a}cklund transformation chain. We also show how to construct local symmetries for this chain
and the corresponding Lax operators.

\section{Parametrisation of $\bbbd_n$ invariant matrices $M_{\mu}(\lambda)$ with simple poles}\label{Sec22}
In this section, we show that a rational in $\lambda$ matrix which is invariant under the dihedral group $\bbbd_n$ and has simple poles at generic orbit
$\{\mu \omega^i,i=0,1, \cdots , n-1\}$ can be completely parametrised by $n-1$ variables.

We assume that the Darboux matrix  $M_\mu$  inherits the same reduction group (\ref{redgroup}). 
Namely, it satisfies
\begin{eqnarray}
&& M_\mu(\lambda)^{-1}=M_\mu^{\T}(\frac{1}{\lambda}), \qquad Q M_\mu(\lambda
\omega) Q^{-1}=M_\mu(\lambda)\label{InvM}
\end{eqnarray}
and has simple poles at $\mu \omega^i$, $k=1,2, \cdots, n (\mu \omega^i\neq 0, \pm1)$.
Here $\T$ is the transpose of a matrix.
We thus take it of the following form:
\begin{eqnarray}
&& M_\mu(\lambda)=C+\sum_{i=0}^{n-1} \frac{Q^i A Q^{-i}}{\lambda
\omega^i-\mu},\label{M}
\end{eqnarray}
where $C$ and $A$ are $n\times n$ matrices.
We now study the constraints on the Darboux matrix  $M_\mu$ following from the condition (\ref{InvM}). 

It follows from the second identity in (\ref{InvM}) that matrix $C$ satisfies $QCQ^{-1}=C$ and thus $C$ 
is diagonal. We write $C=\diag (p^{(i)})$.

\begin{Pro}\label{prop1}
Let $I_n$ denote the identity $n\times n$ matrix. The matrix $M_\mu(\lambda)$
given by (\ref{M}) is invariant under group $\bbbd_n$ if and only if the matrix
$A$ 
and diagonal matrix $C$ satisfy the relations:
\begin{eqnarray}
&&C^2-\frac{1}{\mu}\sum_{i=0}^{n-1} Q^{-i} A^{\T} C Q^{i}=I_n; \qquad
C A +\mu \sum_{j=0}^{n-1}  \frac{Q^{-j} A^{\T} Q^{j} A }{\omega^{j}-\mu^2}=0.\label{cond}
\end{eqnarray}
\end{Pro}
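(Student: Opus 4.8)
The plan is to reduce the statement to the single reduction $r$ and then match a rational function of $\lambda$ to the identity. The group $\bbbd_n$ is generated by $s$ and $r$, and the ansatz (\ref{M}) with $C$ diagonal is already invariant under $s$: writing out $QM_\mu(\lambda\omega)Q^{-1}$ and shifting the summation index $i\mapsto i-1$ modulo $n$ (using $Q^n=I_n$, $\omega^n=1$ and $QCQ^{-1}=C$) returns $M_\mu(\lambda)$. Hence the whole content lies in the first identity of (\ref{InvM}), which I rewrite as
\begin{equation*}
F(\lambda):=M_\mu^{\T}(1/\lambda)\,M_\mu(\lambda)=I_n .
\end{equation*}
First I would record, using that $C$ and all powers of $Q$ are diagonal and that $(Q^iAQ^{-i})^{\T}=Q^{-i}A^{\T}Q^{i}$,
\begin{equation*}
M_\mu^{\T}(1/\lambda)=C+\sum_{j=0}^{n-1}\frac{\lambda\,Q^{-j}A^{\T}Q^{j}}{\omega^{j}-\mu\lambda}.
\end{equation*}

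Multiplying by $M_\mu(\lambda)$ produces four groups of terms: a constant $C^2$, a single sum from $C$ on the right, a single sum from $C$ on the left, and the double sum
\begin{equation*}
\sum_{i,j}\frac{\lambda\,Q^{-j}A^{\T}Q^{j+i}AQ^{-i}}{(\omega^{j}-\mu\lambda)(\lambda\omega^{i}-\mu)} .
\end{equation*}
This $F(\lambda)$ is a rational function whose only singularities are simple poles at $\lambda=\mu\omega^{-i}$ and $\lambda=\omega^{i}/\mu$, and it is regular at $\lambda=0$ and $\lambda=\infty$. Consequently $F\equiv I_n$ if and only if all of its residues vanish (so that $F$ is constant) and its value at infinity equals $I_n$.

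The value at infinity is immediate: the single sum with $C$ on the right dies and the double sum vanishes, while $\lambda/(\omega^{j}-\mu\lambda)\to-1/\mu$ gives $F(\infty)=C^{2}-\tfrac1\mu\sum_{j}Q^{-j}A^{\T}Q^{j}C$; moving $C$ past $Q^{j}$ turns this into the left-hand side of the first relation in (\ref{cond}), so $F(\infty)=I_n$ is exactly that relation. For the residues I would compute the one at a single pole $\lambda=\mu\omega^{-i_0}$: only the $i=i_0$ contributions of the single sum $\sum_i CQ^iAQ^{-i}/(\lambda\omega^i-\mu)$ and of the double sum are singular there. Factoring $Q^{i_0}(\cdots)Q^{-i_0}$ out of the answer (again commuting $C$ through $Q^{i_0}$) and reindexing $j\mapsto j+i_0$ in the double sum, the bracket collapses to $CA+\mu\sum_{j}Q^{-j}A^{\T}Q^{j}A/(\omega^{j}-\mu^{2})$; since $Q^{i_0}$ is invertible this residue vanishes for every $i_0$ precisely when the second relation in (\ref{cond}) holds.

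It remains to dispose of the poles at $\lambda=\omega^{i}/\mu$. Rather than recompute them, I would note the symmetry $F(\lambda)=F^{\T}(1/\lambda)$, which follows directly from the definition and interchanges the two families of poles; hence vanishing of the residues at $\lambda=\omega^{i}/\mu$ is equivalent to what was already proved and yields no new condition. Assembling the three facts gives the claimed equivalence. The main labour, and the only delicate point, is the residue of the double sum: one must reindex the summation and conjugate by $Q^{i_0}$ correctly for the factor $(\omega^{j}-\mu^{2})^{-1}$ and the clean expression $CA+\mu\sum_j Q^{-j}A^{\T}Q^{j}A/(\omega^{j}-\mu^{2})$ to emerge.
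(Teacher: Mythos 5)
Your proposal is correct and follows essentially the same route as the paper: $s$-invariance is immediate from the ansatz, the value of $M_\mu^{\T}(1/\lambda)M_\mu(\lambda)$ at $\lambda=\infty$ yields the first relation, and vanishing of the residues at $\lambda=\mu\omega^{-i_0}$ (after conjugating by $Q^{i_0}$ and reindexing $j\mapsto j+i_0$) yields the second. The only cosmetic difference is that you dispose of the second family of poles via the symmetry $F(\lambda)=F^{\T}(1/\lambda)$, whereas the paper writes out both residue conditions and observes that one is the transpose of the other.
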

\begin{proof} To prove the statement, we need to check the identities in (\ref{InvM}). It is obvious that the given $M_\mu(\lambda)$ 
satisfies $Q M_\mu(\lambda \omega) Q^{-1}=M_\mu(\lambda)$.  We now compute
\begin{eqnarray*}
&&M_\mu^{\T}(\frac{1}{\lambda}) M_\mu(\lambda)=\left(C+\sum_{j=0}^{n-1}
\frac{Q^{-j} A^{\T} Q^{j}}{\frac{\omega^j}{\lambda} -\mu}\right) 
\left(C+\sum_{i=0}^{n-1} \frac{Q^i A Q^{-i}}{\lambda \omega^i-\mu}\right)\\
&&=C^2+\sum_{i=0}^{n-1} \left(\frac{Q^{-i} A^{\T} C Q^{i}}{\frac{\omega^i}{\lambda} -\mu}+ \frac{Q^i C A Q^{-i}}{\lambda \omega^i-\mu}\right)
+\sum_{i,j=0}^{n-1} \frac{Q^{-j} A^{\T} Q^{i+j} A Q^{-i}}{(\frac{\omega^j}{\lambda} -\mu) (\lambda \omega^i-\mu)}\\
&&=C^2-\frac{1}{\mu}\sum_{i=0}^{n-1} Q^{-i} A^{\T} C Q^{i}-\frac{1}{\mu}\sum_{j=0}^{n-1} \frac{\omega^jQ^{-j} A^{\T} C Q^{j}}{\mu \lambda-\omega^j} 
+\sum_{i=0}^{n-1} \frac{Q^i C A Q^{-i}}{\lambda \omega^i-\mu}\\
&&
+\sum_{i,j=0}^{n-1} \frac{Q^{-j} A^{\T} Q^{i+j} A Q^{-i}}{\omega^{i+j}-\mu^2}\left(\frac{\mu}{\lambda \omega^i-\mu}
-\frac{\omega^j}{\lambda \mu -\omega^j}\right),
\end{eqnarray*}
which is the identity matrix according to (\ref{InvM}). 
Taking the limit $\lambda\to\infty$ we obtain the first equation in (\ref{cond}). The singular part (residues) vanishes if
\begin{eqnarray}
&&C A +\mu \sum_{j=0}^{n-1}  \frac{Q^{-i-j} A^{\T} Q^{i+j} A }{\omega^{i+j}-\mu^2}=0, \quad \mbox{ for $i=0,1,\cdots, n-1$};\label{con2}\\
&&\frac{1}{\mu}  A^{\T} C +\sum_{i=0}^{n-1} \frac{A^{\T} Q^{i+j} A Q^{-i-j}}{\omega^{i+j}-\mu^2}=0, \quad \mbox{ for $j=0,1,\cdots, n-1$}.\label{con3}
\end{eqnarray}
Notice that (\ref{con3}) is the transpose of (\ref{con2}) and indeed for the identity (\ref{con2}), if it is valid for $i=0$, then it is also valid
for all other values of $i$ since $Q^n=I_n$ and $\omega^n=1$.
So the only condition for both (\ref{con3}) and (\ref{con2}) is the second identity in  (\ref{cond}).
\end{proof}

Using the first two lemmas in the Appendix, it follows from this proposition that the Darboux matrix $M_\mu(\lambda)$ is trivial if the matrix $A$ is non-singular ($\det A\ne 0$).
\begin{Cor} If $\det A\ne 0$, then $ C^2=\mu^{2n}I_n $ and $M_\mu(\lambda)=\frac{\lambda^n-\mu^{-n}}{\lambda^n-\mu^{n}}C .$
\end{Cor}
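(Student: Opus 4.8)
The plan is to use the non-degeneracy of $A$ to turn the quadratic relation in (\ref{cond}) into a linear one, and then read off the shape of $A$ entrywise. First I would rewrite the second identity in (\ref{cond}) by factoring the common right factor $A$ out of the sum, so that it reads $\big(C+\mu\sum_{j=0}^{n-1}\frac{Q^{-j}A^{\T}Q^j}{\omega^j-\mu^2}\big)A=0$. Since $\det A\neq 0$, cancelling $A$ on the right yields the purely linear relation $C=-\mu\sum_{j=0}^{n-1}\frac{Q^{-j}A^{\T}Q^j}{\omega^j-\mu^2}$.

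Next I would evaluate the right-hand side entrywise. Because $Q=\diag(\omega^i)$, conjugation acts diagonally, $(Q^{-j}A^{\T}Q^j)_{kl}=\omega^{j(l-k)}(A^{\T})_{kl}$, so the $(k,l)$ entry equals $-\mu\,(A^{\T})_{kl}\,S_{l-k}$ with $S_m=\sum_{j=0}^{n-1}\frac{\omega^{jm}}{\omega^j-\mu^2}$. This is exactly where I expect to invoke the first two lemmas of the Appendix, namely the evaluation of these root-of-unity sums. Using $\prod_{j}(x-\omega^j)=x^n-1$ together with the splitting $\frac{w^m}{w-a}=\sum_{i=0}^{m-1}a^{m-1-i}w^i+\frac{a^m}{w-a}$, one obtains $S_m=-\frac{n\mu^{2((m-1)\bmod n)}}{\mu^{2n}-1}$, which is nonzero for every residue $m$ under the genericity hypotheses $\mu\neq 0$ and $\mu^{2n}\neq 1$. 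Since $C$ is diagonal, all off-diagonal entries of the right-hand side must vanish, and because $S_{l-k}\neq 0$ this forces $(A^{\T})_{kl}=0$ for $k\neq l$; hence $A$ is diagonal.

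I expect this diagonality conclusion to be the main obstacle, as it is precisely what excludes nontrivial (rank-deficient) Darboux matrices in the non-degenerate case; everything afterwards is scalar algebra. Once $A$ is diagonal it commutes with $Q$, so $Q^iAQ^{-i}=A$ and both relations in (\ref{cond}) collapse to diagonal identities. The first becomes $C^2-\frac{n}{\mu}AC=I_n$, while the second becomes $CA=\frac{n\mu^{2n-1}}{\mu^{2n}-1}A^2$, i.e.\ $A=\frac{\mu^{2n}-1}{n\mu^{2n-1}}C$. Substituting this into the first relation gives $C^2-\frac{\mu^{2n}-1}{\mu^{2n}}C^2=I_n$, that is $\mu^{-2n}C^2=I_n$, so $C^2=\mu^{2n}I_n$.

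Finally I would substitute back into (\ref{M}). With $A$ diagonal the singular part sums to a scalar multiple of $C$, since $\sum_{i=0}^{n-1}\frac{1}{\lambda\omega^i-\mu}=\frac{n\mu^{n-1}}{\lambda^n-\mu^n}$ by the same root-of-unity identity, so $M_\mu(\lambda)=C+\frac{n\mu^{n-1}}{\lambda^n-\mu^n}A$. Inserting $A=\frac{\mu^{2n}-1}{n\mu^{2n-1}}C$ and using $\frac{\mu^{n-1}(\mu^{2n}-1)}{\mu^{2n-1}}=\mu^n-\mu^{-n}$ gives $M_\mu(\lambda)=C\,\frac{\lambda^n-\mu^{-n}}{\lambda^n-\mu^n}$, as claimed. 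The only delicate bookkeeping is tracking the exponents modulo $n$ in $S_m$ and checking that the genericity hypotheses keep every denominator away from zero; the rest is routine diagonal algebra.
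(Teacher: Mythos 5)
Your proposal is correct and follows essentially the same route as the paper: cancel $A$ on the right of the second relation in (\ref{cond}), deduce that $A$ is diagonal and proportional to $C$ via the root-of-unity sums of Lemma \ref{lem2}, substitute into the first relation to get $C^2=\mu^{2n}I_n$, and resum the pole part to obtain the stated formula for $M_\mu(\lambda)$. The only difference is that you spell out the diagonality of $A$ entrywise (checking $S_{l-k}\neq 0$ under the genericity assumptions), a step the paper asserts more briefly with a reference to its appendix lemmas.
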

\begin{proof} If $\det A\ne 0$, it follows from the second identity in (\ref{cond}) that 
$$
C +\mu \sum_{j=0}^{n-1}  \frac{Q^{-j} A^{\T} Q^{j}}{\omega^{j}-\mu^2}=0,
$$
which implies that matrix $A$ is diagonal since $C$ is diagonal, and thus 
$$
C =-\mu \sum_{j=0}^{n-1}  \frac{Q^{-j} A^{\T} Q^{j}}{\omega^{j}-\mu^2}=-\mu \sum_{j=0}^{n-1}  \frac{1}{\omega^{j}-\mu^2} A=\frac{n \mu^{2n-1}}{\mu^{2n}-1} A,
$$
where we used Lemma \ref{lem2} in Appendix for the last equality.
Substituting this into the first identity in (\ref{cond}), we have
$$
C^2-\frac{n}{\mu}AC=\frac{1}{\mu^{2n}}C^2=I_n, \quad \mbox{that is,} \quad C^2=\mu^{2n}I_n.
$$
Moreover, we have
$$
M_\mu(\lambda)=C+ A\sum_{i=0}^{n-1} \frac{1}{\lambda \omega^i-\mu}=C\left(1-\frac{\mu^{2n}-1}{\lambda n \mu^{2n-1}}
\sum_{i=0}^{n-1} \frac{1}{\frac{\mu}{\lambda}- \omega^i}\right)
=C\left(1-\frac{\mu^{2n}-1}{\lambda n \mu^{2n-1}}
 \frac{n (\frac{\mu}{\lambda})^{n-1} }{(\frac{\mu}{\lambda})^n- 1}\right) .
$$
Simplifying the above expression, we obtain the formula for $M_{\mu}(\lambda)$ in the statement.
\end{proof}

We define the rank of the Darboux transformation as the rank of matrix $A$.
In this paper we restrict ourself with Darboux matrices $M_{\mu}$ of rank $1$.
We represent the matrix $A$ by a bi-vector
$A=a><b$, 
where $a>=(a^{(1)},\ldots ,a^{(n)})^{\T}$ and $<b=( b^{(1)},\ldots ,b^{(n)})$ are
column and row vectors respectively and thus $A_{i,j}=a^{(i)}b^{(j)}$.

\begin{Thm}\label{cor1}
For a rank $1$ matrix $A=a><b$, and a diagonal matrix $C=\diag (p^{(i)})$, 
the matrix $M_{\mu}(\lambda)$ given by (\ref{M}) is invariant under group $\bbbd_n$ if
\begin{eqnarray}
&& {a^{(i)}}^2=\frac{({p^{(i)}}^2-1) \mu^{2(n-i)} \prod_{l=1}^{i-1} {p^{(l)}}^2 } {\mu^{2 n}-\prod_{l=1}^{n-1} {p^{(l)}}^2 };
\label{consb}\\
&&b^{(i)}=\frac{ a^{(i)} \prod_{l=i}^{n-1} {p^{(l)}}^2 ({p^{(n)}}^2-1)}{n \mu^{2(n-i)-1} p^{(i)} }, \label{coar} 
\end{eqnarray}
where $i=1, 2,  \cdots, n$ and $\prod_{l=1}^n {p^{(l)}}^2=\mu^{2 n}$.
\end{Thm}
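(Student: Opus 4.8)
The plan is to invoke Proposition~\ref{prop1}, which reduces $\bbbd_n$--invariance of $M_\mu(\lambda)$ of (\ref{M}) to the two matrix identities (\ref{cond}), and then to substitute the rank~$1$ ansatz $A=a><b$, $C=\diag(p^{(i)})$ and show that (\ref{cond}) collapses to scalar relations solved by (\ref{consb})--(\ref{coar}). Throughout I would use $A_{ij}=a^{(i)}b^{(j)}$, $(A^{\T})_{ij}=a^{(j)}b^{(i)}$, $Q=\diag(\omega^i)$, and the orthogonality $\sum_{i=0}^{n-1}\omega^{ir}=n$ when $r\equiv0\pmod n$ and $0$ otherwise. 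The first identity in (\ref{cond}) is immediate: the $(k,l)$ entry of $\sum_{i=0}^{n-1}Q^{-i}A^{\T}CQ^i$ equals $a^{(l)}b^{(k)}p^{(l)}\sum_i\omega^{i(l-k)}$, so the off--diagonal entries cancel and the sum is the diagonal matrix $n\,\diag(a^{(k)}b^{(k)}p^{(k)})$. Hence the first relation in (\ref{cond}) is equivalent to the scalar identities
\begin{equation}
(p^{(k)})^2-\frac{n}{\mu}\,a^{(k)}b^{(k)}p^{(k)}=1,\qquad k=1,\dots,n,\tag{I}
\end{equation}
that is $a^{(k)}b^{(k)}=\mu((p^{(k)})^2-1)/(np^{(k)})$.

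For the second identity in (\ref{cond}) I would first factor the quadratic term through a discrete Fourier variable: a short computation gives $(Q^{-j}A^{\T}Q^jA)_{kl}=b^{(k)}b^{(l)}\omega^{-jk}S_j$ with $S_j=\sum_m\omega^{jm}(a^{(m)})^2$. Summing against the kernel $1/(\omega^j-\mu^2)$, interchanging the $j$ and $m$ sums, and applying the root--of--unity summation lemma from the Appendix in the form $\sum_{j=0}^{n-1}\omega^{jp}/(\omega^j-\mu^2)=-n\mu^{2(p-1)}/(\mu^{2n}-1)$ for $1\le p\le n$, the $(k,l)$ entry of the second relation, divided by $b^{(l)}$, becomes
\begin{equation}
p^{(k)}a^{(k)}=\frac{\mu n\,b^{(k)}}{\mu^{2n}-1}\sum_{p=1}^{n}(a^{(k+p)})^2\mu^{2(p-1)},\qquad k=1,\dots,n.\tag{II}
\end{equation}
Eliminating $b^{(k)}$ between (I) and (II), and writing $\beta^{(k)}=(a^{(k)})^2$ and $R_k=\sum_{p=1}^{n}\beta^{(k+p)}\mu^{2(p-1)}$, this reads $(p^{(k)})^2\beta^{(k)}=\mu^2((p^{(k)})^2-1)R_k/(\mu^{2n}-1)$.

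The conceptual crux is the elementary telescoping identity $R_k-\mu^2R_{k+1}=(1-\mu^{2n})\beta^{(k+1)}$, valid for any $n$--periodic sequence. Feeding the previous display into it and using $\mu^{2n}\neq1$ shows that the whole second relation is equivalent to the first order cyclic recursion
\begin{equation}
\beta^{(k+1)}=\frac{((p^{(k+1)})^2-1)(p^{(k)})^2}{\mu^2((p^{(k)})^2-1)}\,\beta^{(k)}.\tag{III}
\end{equation}
A direct ratio computation then shows that (\ref{consb}) satisfies (III), while demanding that (III) close up cyclically, $\beta^{(n+1)}=\beta^{(1)}$, forces precisely the constraint $\prod_{l=1}^n(p^{(l)})^2=\mu^{2n}$, since the telescoping product of the recursion factors equals $\prod_{k=1}^n(p^{(k)})^2/\mu^{2n}$. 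Finally I would solve (I) for $b^{(k)}$ and verify, again using the constraint to simplify $\prod_{l=1}^{n-1}(p^{(l)})^2((p^{(n)})^2-1)=\mu^{2n}-\prod_{l=1}^{n-1}(p^{(l)})^2$, that it coincides with (\ref{coar}).

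I expect the main obstacle to be the second identity in (\ref{cond}): correctly reducing the double sum $\sum_{i,j}Q^{-j}A^{\T}Q^{i+j}AQ^{-i}$ via the variable $S_j$ and invoking the summation lemma with the correct index range, where the residue $p=0$ must be read as $p=n$; a stray sign or shift there would corrupt the recursion. Once (II) is established, the recognition of the telescoping identity turning the nonlocal relation (II) into the local recursion (III) is the decisive step, after which the verification of (\ref{consb})--(\ref{coar}) and the appearance of the determinant constraint $\prod_{l=1}^n(p^{(l)})^2=\mu^{2n}$ as the periodicity condition of (III) are routine.
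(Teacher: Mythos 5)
Your proposal is correct and follows essentially the same route as the paper: invoke Proposition~\ref{prop1}, reduce the two conditions (\ref{cond}) to the scalar relations $(p^{(k)})^2-\tfrac{n}{\mu}a^{(k)}b^{(k)}p^{(k)}=1$ and the root-of-unity-summed relation for $p^{(k)}a^{(k)}$, eliminate $b^{(k)}$, solve the resulting cyclic linear system for $(a^{(i)})^2$, and recover $b^{(i)}$ from the diagonal relation. The only genuine difference is that where the paper simply asserts the solution of that linear system and the constraint $\prod_{l=1}^n (p^{(l)})^2=\mu^{2n}$, you derive both explicitly via the telescoping identity $R_k-\mu^2R_{k+1}=(1-\mu^{2n})\beta^{(k+1)}$, which turns the nonlocal system into a first-order recursion and makes the origin of the constraint transparent — a welcome filling-in of a detail the paper glosses over.
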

\begin{proof} To prove this statement we need to check the conditions (\ref{cond}) in Proposition \ref{prop1} for the given matrix $A$.
It follows from Lemma \ref{lem1} that matrix $\sum_{i=0}^{n-1} Q^{-i} A^{\T} C Q^{i}$ is diagonal. 
The diagonal entries of the first identity of (\ref{cond}) are 
\begin{eqnarray}
{p^{(i)}}^2 -\frac{n}{\mu} a^{(i)} b^{(i)} p^{(i)}=1, \qquad i=1, 2, \cdots,n. \label{reab} 
\end{eqnarray}

The matrix entries of the second identity of (\ref{cond}) can be represented as
\begin{eqnarray*}
&& 0=p^{(k)} a^{(k)} b^{(l)}+\mu \sum_{j=0}^{n-1} \sum_{r=1}^n\frac{\omega^{j(r-k)} b^{(k)} {a^{(r)}}^2 b^{(l)}}{\omega^j-\mu^2}, \qquad k,l=1, 2, \cdots,n.
\end{eqnarray*}
Notice that all terms have a factor $b^{(l)}$. For a nonzero vector $b$, it is equivalent to
\begin{eqnarray}
0=p^{(k)} a^{(k)}+\mu \sum_{j=0}^{n-1} \sum_{r=1}^n\frac{\omega^{j(r-k)} b^{(k)} {a^{(r)}}^2}{\omega^j-\mu^2}, \qquad k=1, 2, \cdots,n. \label{reca}
\end{eqnarray}
Notice that there are only $n$ independent relations, from which we can determine all $b^{(k)}$ as follows:
\begin{eqnarray*}
 b^{(k)}=\frac{p^{(k)} a^{(k)} }{\mu \sum_{r=1}^n  \gamma_{r-k}(\mu^2)
{a^{(r)}}^2},
 \qquad \gamma_{r-k}(\mu^2)=\sum_{j=0}^{n-1} \frac{\omega^{j(r-k)}}
{\mu^2-\omega^j} =\frac{n \mu^{2 \{(r-k-1)\!\!\! \mod\! n\}}}{\mu^{2 n}-1},
\label{reba}
\end{eqnarray*}
which is the direct result from Lemma \ref{lem2} in Appendix.
This leads to
\begin{eqnarray}
b^{(k)}=\frac{(\mu^{2 n}-1) a^{(k)} p^{(k)}}{n \mu \sum_{r=1}^n {a^{(r)}}^2 \mu^{2 \{(r-k-1)\!\!\! \mod\! n\}}}.\label{consa}
\end{eqnarray}
These formulas for $b^{(k)}$ have appeared in \cite{bury}, where the author used them when
computing soliton solutions for the Volterra system.

Combining (\ref{reab}) and (\ref{consa}), we obtain the following linear system for ${a^{(i)}}^2$:
\begin{eqnarray*}
\mu^2 ({p^{(i)}}^2-1) \sum_{k=1,k\neq i}^n {a^{(k)}}^2 \mu^{2\{(k-i-1)\!\!\! \mod\! n\}}
+ ({p^{(i)}}^2-\mu^{2n}) {a^{(i)}}^2 =0
\end{eqnarray*}
Let ${a^{(n)}}^2=1$. We solve this linear system and obtain a unique solution
\begin{eqnarray*}
{a^{(j)}}^2=\frac{({p^{(j)}}^2-1) \mu^{2(n-j)} \prod_{l=1}^{j-1} {p^{(l)}}^2 } {\mu^{2 n}-\prod_{l=1}^{n-1} {p^{(l)}}^2 }, \qquad j=1, 2,  \cdots, n-1
\end{eqnarray*}
and the constraint $\prod_{l=1}^{n} {p^{(l)}}^2 =\mu^{2 n}$. Here we use the convention that $\prod_{l=1}^{0} {p^{(l)}}^2=1$.

Substituting (\ref{consb}) into (\ref{reab}), we get
\begin{eqnarray*}
b^{(i)}=\frac{\mu ({p^{(i)}}^2-1)}{n a^{(i)} p^{(i)}}=\frac{ a^{(i)} (\mu^{2 n}-\prod_{l=1}^{n-1} {p^{(l)}}^2 )}{n \mu^{2(n-i)-1} p^{(i)} 
\prod_{l=1}^{i-1} {p^{(l)}}^2 }
=\frac{ a^{(i)} \prod_{l=i}^{n-1} {p^{(l)}}^2 ({p^{(n)}}^2-1)}{n \mu^{2(n-i)-1} p^{(i)} },
\end{eqnarray*}
where we used $\mu^{2 n}=\prod_{l=1}^{n} {p^{(l)}}^2 $, and we complete the proof.
\end{proof}
Following from this theorem, the matrix $M_{\mu}(\lambda)$ is completely parametrised by $p^{(i)}$, $i=1, 2, \cdots, n-1$.

\section{Darboux transformation for generalised Volterra lattice}\label{Sec23}
We construct the rank $1$ Darboux matrix $M_{\mu}$, invariant under the $\bbbd_n$ reduction
group, for the Lax operator $L^{(1)}$. 

From the compatibility condition (\ref{backlundx}), we know $M_{\mu}$ satisfy
\begin{eqnarray}\label{nonzero}
D_x M_{\mu}=\cS (V^{(1)}) \ M_{\mu}-M_{\mu} \ V^{(1)}
\end{eqnarray}
First we write out the right hand of the above identity. It equals to
\begin{eqnarray*}
 &&(\lambda \bu_1 \Delta -\lambda^{-1} \Delta^{-1} \bu_1) (C+\sum_{i=0}^{n-1} \frac{Q^i A Q^{-i}}{\lambda\omega^i-\mu})
 -(C+\sum_{i=0}^{n-1} \frac{Q^i A Q^{-i}}{\lambda \omega^i-\mu})(\lambda \bu \Delta -\lambda^{-1} \Delta^{-1} \bu)\\
 &=&\lambda (\bu_1 \Delta C-C \bu \Delta )+\lambda^{-1}(C \Delta^{-1} \bu-\Delta^{-1} \bu_1 C) \\
 &&+\sum_{i=0}^{n-1}(\frac{1}{\omega^i}+\frac{\mu}{\omega^i(\lambda\omega^i-\mu)}) (\bu_1 \Delta Q^i A Q^{-i}-Q^i A Q^{-i}  \bu \Delta)\\
 &&+\sum_{i=0}^{n-1}(\frac{1}{\mu \lambda}-\frac{\omega^i}{\mu (\lambda\omega^i-\mu)})  
 (\Delta^{-1} \bu_1 Q^i A Q^{-i}-Q^i A Q^{-i} \Delta^{-1} \bu) .
\end{eqnarray*}
Notice that $\frac{1}{\omega^i} Q^{-i} \Delta Q^i=\Delta$. Then we compare the residues at different poles $\lambda=+\infty, 0, \mu$ and constant terms
on both sides of (\ref{nonzero}).
 The zero curvature condition (\ref{nonzero}) is equivalent to the following 
four identities:
\begin{eqnarray}
&&\bu_1 \Delta C-C \bu \Delta =0;\label{non1}\\
&&C \Delta^{-1} \bu-\Delta^{-1} \bu_1 C +\sum_{i=0}^{n-1}\frac{1}{\mu} 
 (\Delta^{-1} \bu_1 Q^i A Q^{-i}-Q^i A Q^{-i} \Delta^{-1} \bu)=0;\label{non3}\\
 &&A_x=\mu (\bu_1 \Delta  A - A   \bu \Delta)-\frac{1}{\mu }
 (\Delta^{-1} \bu_1  A - A  \Delta^{-1} \bu); \label{non4}\\
 &&C_x=\sum_{i=0}^{n-1}\frac{1}{\omega^i} (\bu_1 \Delta Q^i A Q^{-i}-Q^i A Q^{-i}  \bu \Delta).\label{non2}
\end{eqnarray}
It follows from (\ref{non1}) that
\begin{eqnarray}
p^{(i+1)}=p^{(i)} \exp(\phi^{(i)}-\phi_1^{(i)})=p^{(1)} \exp(\sum_{k=1}^i (\phi^{(k)}-\phi_1^{(k)})) . \label{indc} 
\end{eqnarray}
Substituting (\ref{indc}) into the identity $\prod_{i=1}^n {p^{(i)}}^2=\mu^{2n}$ from Theorem \ref{cor1}, we have
\begin{eqnarray*}
{p^{(1)}}^{2n}  \prod_{i=2}^{n} \exp(2 \sum_{k=1}^{i-1} (\phi^{(k)}-\phi_1^{(k)}))
={p^{(1)}}^{2n}   \exp(2 \sum_{k=1}^{n-1} (n-k)  (\phi^{(k)}-\phi_1^{(k)}))=
\mu^{2n},
\end{eqnarray*}
which leads to $$p^{(1)}=\mu \exp( \sum_{k=1}^{n-1} (1-\frac{k}{n})(\phi_1^{(k)}-\phi^{(k)})).
$$
Using the relation (\ref{indc}), we obtain 
\begin{eqnarray}
 p^{(i)}=\mu \exp\{(\sum_{k=i}^{n-1}-\sum_{k=1}^{n-1}\frac{k}{n}) (\phi_1^{(k)}-\phi^{(k)})\} .\label{pu}
\end{eqnarray}
It follows from Lemma \ref{lem1} in Appendix that  $$  \sum_{i=0}^{n-1} \frac{1}{\omega^{i}} Q^{i} A Q^{-i}=n \sum_{j=1}^n
a^{(j)} b^{(j-1)} {\bf e}_{j,j-1}.$$
From (\ref{non2}) we get 
\begin{eqnarray}
p^{(k)}_x=n \left( a^{(k+1)} b^{(k)} \exp(\phi_1^{(k)})-a^{(k)} b^{(k-1)} \exp( \phi^{(k-1)}) \right),\qquad k=1,2,\cdots, n-1, \label{darb} 
\end{eqnarray}
where the indexes are again counted modulo $n$. 

So far, we only deal with two of four equivalent identities of the zero curvature condition (\ref{nonzero}). We used (\ref{non1}) and (\ref{non2}) to 
obtain (\ref{pu}) and (\ref{darb}), respectively. To claim that we have obtain a Darboux transformation and further the B{\"a}cklund transformation,
we need to show that the other two identities are satisfied due to the reduction group invariance of the matrix $M_{\mu}$. 

Indeed, it can be easily checked that the identity (\ref{non3}) holds from (\ref{non1}) as follows:

We know that $\sum_{i=0}^{n-1} Q^i A Q^{-i}=n \diag\ A$ following from Lemma \ref{lem1} in Appendix. So the non-zero entries of the left side of (\ref{non3}) are
\begin{eqnarray*}
 &&p^{(i+1)} \exp(\phi^{(i)})-p^{(i)} \exp(\phi_1^{(i)})+\frac{n}{\mu} a^{(i)} b^{(i)} \exp(\phi_1^{(i)})
 -\frac{n}{\mu} a^{(i+1)} b^{(i+1)} \exp(\phi^{(i)})\\
&=&\frac{1}{ p^{(i+1)}} \exp(\phi^{(i)})-\frac{1}{ p^{(i)}}\exp(\phi_1^{(i)})=0,
\end{eqnarray*}
where we used formulas (\ref{reab}) and (\ref{indc}). 

Using (\ref{darb}), (\ref{consb}) and (\ref{coar}), we are able to check identity (\ref{non4}) is valid as shown in 
Proposition \ref{proA} in Appendix. Thus we have the following result:

\begin{Thm}\label{thm2}
The B{\"a}cklund transformation for system (\ref{2+1}) is
\begin{eqnarray}
&& (\cS\!\!-\!\!1) \phi_x^{(k)}=-\frac{n}{\mu^2} (\psh-1)^2 \left(a^{(k-1)} b^{(k)} p^{(k-1)}\exp(\phi^{(k-1)})\right),\label{flowx}
\end{eqnarray} 
where $p^{(i)}, a^{(i)}$ and $b^{(i)}$ can be expressed via (\ref{pu}), (\ref{consb}) and (\ref{coar}) in terms of $\phi^{(k)}$, 
and $\psh$ is the $n$ periodic shift operator.
\end{Thm}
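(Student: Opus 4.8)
The plan is to read the statement as the explicit form of the Bäcklund transformation obtained by eliminating the Darboux data from the two relations already extracted from the zero curvature condition (\ref{nonzero}). Since the remaining identities (\ref{non3}) and (\ref{non4}) have been shown to hold automatically by virtue of the reduction-group structure (the latter via Proposition \ref{proA}), the only task left is to combine the consequence (\ref{pu}) of (\ref{non1}) with the consequence (\ref{darb}) of (\ref{non2}), and to recognise the outcome as (\ref{flowx}). Concretely, I would use (\ref{indc})--(\ref{pu}) to govern how the diagonal data $p^{(i)}$ depend on the fields $\phi^{(k)}$, and (\ref{darb}) to express the $x$-derivatives of the $p^{(i)}$ through the bi-vector data $a^{(i)},b^{(i)}$, then reconcile the two descriptions.

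First I would rewrite (\ref{indc}) in logarithmic form, $\log p^{(i)}-\log p^{(i+1)}=\phi_1^{(i)}-\phi^{(i)}=(\cS-1)\phi^{(i)}$, which holds for every $i$ modulo $n$ and is consistent around the full cycle because $\prod_{l=1}^n {p^{(l)}}^2=\mu^{2n}$ and $\sum_i\phi^{(i)}=0$ make both sides telescope to zero. Differentiating in $x$, and using that the $n$-periodic shift acts on the upper index by $\psh\bigl(p^{(i)}_x/p^{(i)}\bigr)=p^{(i+1)}_x/p^{(i+1)}$, I obtain the clean relation $(\cS-1)\phi_x^{(i)}=-(\psh-1)\bigl(p^{(i)}_x/p^{(i)}\bigr)$. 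The second, and key, step is to show that $p^{(i)}_x/p^{(i)}$ is itself a total $\psh$-difference. Dividing (\ref{darb}) by $p^{(k)}$ and using (\ref{indc}) to convert $\exp(\phi_1^{(k)})/p^{(k)}$ into $\exp(\phi^{(k)})/p^{(k+1)}$, the first term becomes precisely the $\psh$-image of $F^{(k)}:=\frac{n}{p^{(k)}}a^{(k)}b^{(k-1)}\exp(\phi^{(k-1)})$, so that $p^{(k)}_x/p^{(k)}=\psh F^{(k)}-F^{(k)}=(\psh-1)F^{(k)}$. Substituting this back gives $(\cS-1)\phi_x^{(i)}=-(\psh-1)^2F^{(i)}$, which already carries the operator structure of (\ref{flowx}).

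It then remains to match $F^{(i)}$ with the expression inside (\ref{flowx}), and this is where (\ref{consb}) and (\ref{coar}) enter. I would verify the pointwise identity $\frac{a^{(i)}b^{(i-1)}}{a^{(i-1)}b^{(i)}}=\frac{p^{(i-1)}p^{(i)}}{\mu^2}$ by substituting the formulas for $a$ and $b$ and cancelling the telescoping products $\prod_l {p^{(l)}}^2$; it converts $F^{(i)}$ into $\frac{n}{\mu^2}a^{(i-1)}b^{(i)}p^{(i-1)}\exp(\phi^{(i-1)})$, yielding (\ref{flowx}) verbatim. I expect the main obstacle to be bookkeeping rather than conceptual: one must keep the Darboux shift $\cS$ (which turns $\phi$ into $\phi_1$) strictly separate from the periodic upper-index shift $\psh$, check that the telescoping into $(\psh-1)^2$ respects the modular index convention, and push the products of $p^{(l)}$ through the $a,b$ formulas without sign or index slips. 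The one genuinely delicate point is the rewriting of (\ref{darb}) as a single $\psh$-difference, since it is exactly the relation (\ref{indc}) that absorbs the mismatch between $1/p^{(k)}$ and $1/p^{(k+1)}$ and thereby renders the right-hand side of (\ref{flowx}) a total difference.
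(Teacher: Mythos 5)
Your proposal is correct and follows essentially the same route as the paper's own proof: differentiate the logarithmic relation $(\cS-1)\phi^{(k)}=-(\psh-1)\ln p^{(k)}$, substitute (\ref{darb}), use (\ref{indc}) to recognise the right-hand side as a perfect $(\psh-1)$-difference, and finish with the identity $\mu^2 a^{(k)}b^{(k-1)}=a^{(k-1)}b^{(k)}p^{(k)}p^{(k-1)}$ coming from (\ref{coar}). The only cosmetic difference is that you propose to verify that last identity by direct substitution into (\ref{consb})--(\ref{coar}), while the paper simply reads it off from (\ref{coar}); both are the same computation.
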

\begin{proof} We have shown that the zero curvature condition (\ref{nonzero}) are equivalent to (\ref{pu}) and (\ref{darb}).
It leads from (\ref{indc}) that the dependent variables $\phi^{(k)}$, $k=1, 2,\cdots,n-1$ are determined by $p^{(i)}$ as follows:
\begin{eqnarray}
 (\cS\!\!-\!\!1) \phi^{(k)}=\ln p^{(k)}-\ln p^{(k+1)}=-(\psh-1) \ln p^{(k)}. \label{phip}
\end{eqnarray}
Differentiating both sides of (\ref{phip}) with respect to $x$ and using (\ref{darb}), we get 
\begin{eqnarray*}
 &&(\cS\!\!-\!\!1) \phi_x^{(k)}=-n (\psh-1)\left( \frac{a^{(k+1)} b^{(k)} \exp(\phi_1^{(k)})}{p^{(k)}}-\frac{a^{(k)} b^{(k-1)} \exp( \phi^{(k-1)})}{p^{(k)}} \right)\\
 &&\qquad=-n (\psh-1)\left( \frac{a^{(k+1)} b^{(k)} \exp(\phi^{(k)})}{p^{(k+1)}}-\frac{a^{(k)} b^{(k-1)} \exp( \phi^{(k-1)})}{p^{(k)}} \right)\\
 &&\qquad=-n (\psh-1)^2\left(\frac{a^{(k)} b^{(k-1)} \exp( \phi^{(k-1)})}{p^{(k)}} \right).
\end{eqnarray*}
It follows from (\ref{coar}) that $\mu^2 a^{(k)} b^{(k-1)}= a^{(k-1)} b^{(k)}  p^{(k)} p^{(k-1)}$. Substituting it into the above formula, we get
the B{\"a}cklund transformation as stated.
\end{proof}

The matrix $M_{\mu}$ satisfies the relation in Theorem \ref{cor1}, together with the relations (\ref{pu}) and (\ref{darb}) it is the Darboux transformation
for the operator $L^{(1)}$ given by (\ref{laxva}).

Obviously, the right-hand side of (\ref{flowx}) is not in the image of $\cS-1$ implying the differential-difference equation for $\phi^{(i)}$ is not evolutionary.
However, we know it is integrable due to the zero curvature condition. To directly search for its symmetries is a rather hard task for arbitrary $n$. 
In the next section, we show that there exists another Lax operator with matrix part $U_{\mu}$ such 
that the zero curvature condition between $M_{\mu}$ and $U_{\mu}$ defined by (\ref{U}) naturally leads to the symmetry flow for the 
nonevolutionary equation (\ref{flowx}).

\section{An evolutionary flow of the B{\"a}cklund transformation}\label{Sec3}
The B{\"a}cklund transformation (\ref{flowx}) given in the previous section can be viewed as an integrable differential difference equation. 
For arbitrary $n$, it is not easy to directly compute the generalised symmetries for this nonevolutionary system according to the following definition. 
In this section, we compute one of its higher symmetries using the Darboux matrix $M_{\mu}$. Meanwhile, we also provide its Lax representation.

We first give the general definition of symmetry. More details on it can be found in \cite{mr94g:58260}.
\begin{Def}
Given a $k$-component system ${\bf Q}([\bu])=0 $, we say an evolutionary vector ${\bf P}([\bu])$ is its symmetry if and only if the system
is invariant along the flow $\bu_{\tau}={\bf P}$, that is, $D_{\tau} \left({\bf Q}\right)=0$.
\end{Def}

In what follows, we construct a symmetry using the Darboux matrix $M_{\mu}$ and check the resulting flow is indeed a symmetry of (\ref{flowx}) according to the above definition.

Notice that (\ref{nonzero}) is equivalent to
\begin{eqnarray}\label{nonzerom}
 M_{\mu}^{-1} D_x (M_{\mu})=M_{\mu}^{-1} \cS (V^{(1)}) \ M_{\mu}- \ V^{(1)}.
\end{eqnarray}
Its left hand side  has simple poles at points of the generic orbit, that is, $\lambda= \omega^i \mu$ ($i=0, 1, \cdots, n-1)$, 
and  matrix $V^{(1)} $ has simple poles at points of a degenerated orbit $\lambda=0, \infty$.  We replace matrix $V^{(1)} $ by a matrix $U_{\mu}$ and require that 
$U_{\mu}$ has poles consistent to the left hand side of (\ref{nonzerom}).

In other words, we have
\begin{eqnarray}\label{lins}
 \cS \Psi=\overline{\Psi}=M_{\mu} \Psi \quad \mbox{and} \quad D_{\tau} \Psi=U_{\mu} \Psi .
\end{eqnarray}
It follows from (\ref{lins}) that matrix $U_\mu$ inherits the same reduction group (\ref{redgroup}), that is, it satisfies
\begin{eqnarray}
&& U_\mu(\lambda)=-U_\mu^{\T}(\frac{1}{\lambda}), \qquad \quad Q
U_\mu(\lambda
\omega) Q^{-1}=U_\mu(\lambda). \label{InvU}
\end{eqnarray}
Thus it is of the following form:
\begin{eqnarray}
&&U_\mu(\lambda)=\sum_{i=0}^{n-1} \left( \frac{Q^i B Q^{-i}}{\lambda
\omega^i-\mu}-\frac{Q^{-i} B^{\T} Q^{i}}{\frac{\omega^i}{\lambda} -\mu}\right),
\label{U}
\end{eqnarray}
where $B$ is an $n\times n$ matrix. 
Unlike what we have seen for the Darboux matrix  $M_\mu$,
such $U_\mu(\lambda)$ satisfying (\ref{InvU}) is  automatically invariant under the $\bbbd_n$ group.

We now derive the differential-difference equation for $p^{(i)}$, $i=1, 2, \cdots, n-1$ using the zero curvature condition 
\begin{eqnarray}\label{zero}
D_{\tau} M_{\mu}=\cS (U_{\mu}) \ M_{\mu}-M_{\mu} \ U_{\mu} .
\end{eqnarray}
Notice that its right-hand side is equal to
\begin{eqnarray*}
&&\sum_{i=0}^{n-1} \frac{Q^{-i} (B_1^{\T} C -C B^{\T}) Q^{i}}{\mu} +\sum_{i=0}^{n-1} \left( \frac{Q^i (B_1C-C B) Q^{-i}}{\lambda \omega^i-\mu} 
+\frac{Q^{-i} (B_1^{\T} C -C B^{\T}) Q^{i}}{\mu} \frac{\omega^i}{\mu \lambda-\omega^i}\right)\\
&& +\sum_{i=0}^{n-1}  \frac{Q^i (B_1 A-A B) Q^{-i}}{(\lambda \omega^i-\mu)^2}
+\sum_{i,j=0,i\neq j}^{n-1} \frac{Q^i (B_1 Q^{j-i} A -A Q^{j-i} B) Q^{-j}}{\mu (\omega^i-\omega^j)} \left(-\frac{\omega^i}{\lambda \omega^i-\mu}
+\frac{\omega^j}{\lambda \omega^j-\mu} \right)\\
&& +\sum_{i,j=0}^{n-1} \frac{Q^{i} A Q^{-(i+j)} B^{\T} Q^{j}-Q^{-j} B_1^{\T} Q^{i+j} A Q^{-i}}{\omega^{i+j}-\mu^2}\left(\frac{\mu}{\lambda \omega^i-\mu}
-\frac{\omega^j}{\lambda \mu -\omega^j}\right).
\end{eqnarray*}
Comparing to the left-hand side, we obtain its following equivalent identities:
\begin{eqnarray}
&&B_1 A-A B=0; \label{loc1}\\
&&C_{\tau}=\sum_{i=0}^{n-1} \frac{Q^{-i} (B_1^{\T} C -C B^{\T}) Q^{i}}{\mu}; \label{eqC}\\
&&\frac{ B_1^{\T} C -C B^{\T}}{\mu} = \sum_{i=0}^{n-1} \frac{Q^{i} A Q^{-i} B^{\T} -B_1^{\T} Q^{i} A Q^{-i}}{\omega^{i}-\mu^2};\label{BC}\\
&&A_{\tau}= B_1C-C B-\sum_{j=1}^{n-1} \frac{(B_1 Q^{j} A -A Q^{j} B) Q^{-j}+Q^j (B_1 Q^{-j} A-A Q^{-j} B)}{\mu (1-\omega^{j})} \nonumber\\
&&\qquad+\sum_{j=0}^{n-1}\frac{ (A Q^{-j} B^{\T} Q^{j}-Q^{-j} B_1^{\T} Q^{j} A) \mu}{\omega^{j}-\mu^2}. \label{At}
\end{eqnarray}
It is easy to check that 
\begin{eqnarray}
B=\frac{a_{-1}><b}{<b, a_{-1}>} \label{defB} 
\end{eqnarray}
is a solution of (\ref{loc1}) when $A=a><b$. 
It follows from (\ref{eqC}) and Lemma \ref{lem1} (see Appendix) that
\begin{eqnarray}
p^{(i)}_{\tau}=\frac{n}{\mu} \left(\frac{ b_1^{(i)} a^{(i)} p^{(i)}}{<b_1,\ a>}-\frac{ b^{(i)} a_{-1}^{(i)} p^{(i)}}{<b,\ a_{-1}>}\right)
=\frac{n }{\mu} p^{(i)} (\cS-1) \frac{ b^{(i)} a_{-1}^{(i)}}{<b,\ a_{-1}>}, \label{eqps}
\end{eqnarray}
where $a$ and $b$ satisfy (\ref{consb}) and (\ref{coar}).

The consistent condition (\ref{BC}) can be obtained from (\ref{defB}). To see this,  we are going to show that 
\begin{eqnarray*}
\frac{B_1^{\T} C}{\mu}  = \sum_{i=0}^{n-1} \frac{ B_1^{\T} Q^{i} A Q^{-i}}{\mu^2-\omega^{i}} \quad \mbox{and} \quad
\frac{C B^{\T}}{\mu}= \sum_{i=0}^{n-1} \frac{Q^{i} A Q^{-i} B^{\T} }{\mu^2-\omega^{i}}.
\end{eqnarray*}
Indeed, we write out the entries for the matrices by substituting (\ref{defB}) and $A=a><b$ into them.
They both are equivalent to the identity (\ref{reca}).
Finally, according to Proposition \ref{proB} in Appendix, the identity (\ref{At}) follows from (\ref{eqps}).
Thus we obtain the following result:
\begin{Thm}\label{thm3}
Let $A= a><b$ and $B$ be defined by (\ref{defB}).
The evolutionary differential-difference equation (\ref{eqps}) possesses a Lax representation (\ref{zero}) with
\begin{eqnarray*}
&&M_{\mu}=C+\sum_{i=0}^{n-1} \frac{Q^i A Q^{-i}}{\lambda \omega^i-\mu},\qquad 
U_{\mu}=\frac{Q^i B Q^{-i}}{\lambda \omega^i-\mu}-\frac{Q^{-i} B^{\T} Q^{i}}{\frac{\omega^i}{\lambda} -\mu}, \qquad 
\end{eqnarray*}
where $C=\diag (p^{(i)}), Q=\diag (\omega^i), \omega=\exp\frac{2\pi {\rm i}}{n} $, and $a$ and $b$ satisfy (\ref{consb}) and (\ref{coar}) respectively.
\end{Thm}

Using (\ref{phip}), we obtain the evolutionary differential-difference equations for $\phi^{(k)}$ from (\ref{eqps}), that is,
\begin{eqnarray}
&& \phi_{\tau}^{(i)}=-(\psh-1) \left(\frac{p_{\tau}^{(i)}}{p^{(i)}}\right)
=-\frac{n }{\mu} (\psh-1) \left(\frac{ b^{(i)} a_{-1}^{(i)}}{<b,\ a_{-1}>} \right)\label{flowt}
\end{eqnarray}
Notice that both equations (\ref{flowx}) and (\ref{flowt}) are obtained from the same Darboux matrix $M_{\mu}$. This implies that both of them share the same generalised
symmetries and conserved densities derived from the zero curvature conditions \cite{mr86h:58071, mik12, mik13}. Equation (\ref{flowt}) is evolutionary, which can be viewed
as a symmetry of the nonevolutionary equation (\ref{flowx}). We are going to show it by direct calculation in the following theorem.
\begin{Thm}\label{thm4}
Equation (\ref{flowt}) is a symmetry of the nonevolutionary equation (\ref{flowx}).
\end{Thm}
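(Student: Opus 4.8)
The plan is to use the definition of symmetry directly: writing the nonevolutionary equation (\ref{flowx}) as $Q^{(k)}:=(\cS-1)\phi_x^{(k)}-R^{(k)}=0$ with $R^{(k)}=-\frac{n}{\mu^2}(\psh-1)^2\big(a^{(k-1)}b^{(k)}p^{(k-1)}\exp(\phi^{(k-1)})\big)$, I must verify that $D_\tau Q^{(k)}=0$ once $\phi^{(k)}$ is made to evolve by the flow (\ref{flowt}). The guiding idea is that both (\ref{flowx}) and (\ref{flowt}) are merely the $\phi$--forms of the two flows (\ref{darb}) and (\ref{eqps}) on the variables $p^{(i)}$, the two descriptions being tied together by (\ref{phip}). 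Since $\cS$ acts on the lattice index, $\psh$ on the upper (internal) index, and $D_x$, $D_\tau$ are derivations, these four operators commute pairwise, and I shall use this freely.

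First I would carry out the reduction. In the proof of Theorem \ref{thm2} it was already shown, as an algebraic identity, that the right--hand side of (\ref{flowx}) equals $R^{(k)}=-(\psh-1)\big(p_x^{(k)}/p^{(k)}\big)$ with $p_x^{(k)}$ given by (\ref{darb}); likewise, differentiating (\ref{phip}) in $\tau$ and inserting (\ref{eqps}) gives $(\cS-1)\phi_\tau^{(k)}=-(\psh-1)\big(p_\tau^{(k)}/p^{(k)}\big)$. Using $D_\tau\phi_x^{(k)}=D_x\phi_\tau^{(k)}$ and the commutativity of the operators, a short computation collapses the mixed derivative to
\[
D_\tau Q^{(k)}=(\psh-1)\,\frac{D_\tau p_x^{(k)}-D_x p_\tau^{(k)}}{p^{(k)}},
\]
the cross terms $p_x^{(k)}p_\tau^{(k)}/(p^{(k)})^2$ cancelling. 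Hence the theorem follows once I establish the commutativity of the two flows on the $p$--variables,
\[
D_\tau p_x^{(k)}=D_x p_\tau^{(k)},\qquad k=1,\dots,n-1,
\]
that is, that (\ref{eqps}) is a symmetry of (\ref{darb}).

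The substantive part, which I expect to be the main obstacle, is verifying this commutativity by direct computation. I would differentiate (\ref{darb}) in $\tau$ and (\ref{eqps}) in $x$, propagating the evolution to the auxiliary data at each stage: the derivatives $a^{(i)}_\tau,b^{(i)}_\tau$ and $a^{(i)}_x,b^{(i)}_x$ are obtained from the explicit formulas (\ref{consb}), (\ref{coar}) together with $p_\tau^{(i)}$ and $p_x^{(i)}$, while $\phi_\tau^{(i)}$, $\phi_x^{(i)}$ and the derivatives of the pairing $\langle b,\,a_{-1}\rangle$ follow from (\ref{flowt}) and (\ref{darb}). The difficulty is one of bookkeeping rather than of idea: because $a^{(i)}$ and $b^{(i)}$ are algebraic (square--root and rational) functions of all the $p^{(l)}$ subject to the constraint $\prod_{l=1}^n (p^{(l)})^2=\mu^{2n}$, each differentiation produces sums over every index, and one must show that after substituting the two flows the two mixed derivatives reduce to one and the same expression, the surplus terms telescoping under $(\cS-1)$ and $(\psh-1)$.

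I anticipate organising this last step into an auxiliary lemma in the Appendix, in the same spirit as Propositions \ref{proA} and \ref{proB}, which repackages the differentiated relations so that the equality $D_\tau p_x^{(k)}=D_x p_\tau^{(k)}$ is reduced to the purely algebraic identities (\ref{reab}), (\ref{reca}) and the constraint already established in Theorem \ref{cor1}. A more conceptual alternative would be to prove the equivalent matrix identity $W:=D_\tau V^{(1)}-D_x U_\mu+[V^{(1)},U_\mu]=0$ by matching residues at $\lambda=0,\infty$ and at the orbit $\lambda=\mu\omega^i$ using the two zero--curvature conditions (\ref{nonzero}) and (\ref{zero}); this is cleaner to state, but it demands the same pole--by--pole cancellations, so I would keep the $p$--variable computation as the primary argument and invoke $W=0$ only as a consistency check.
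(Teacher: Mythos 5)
Your strategy coincides with the paper's: apply the definition of a symmetry directly, pass from $\phi$ to the variables $p^{(i)}$ via (\ref{phip}), and reduce the invariance of (\ref{flowx}) under the flow (\ref{flowt}) to a single cross-derivative identity to be checked from the explicit formulas (\ref{consb}), (\ref{coar}). Your reduction step is correct and slightly cleaner than the paper's: the cross terms $p_x^{(k)}p_\tau^{(k)}/(p^{(k)})^2$ do cancel, collapsing $D_\tau Q^{(k)}$ to $(\psh-1)\bigl(D_\tau p_x^{(k)}-D_x p_\tau^{(k)}\bigr)/p^{(k)}$, so that it suffices to prove that the flows (\ref{darb}) and (\ref{eqps}) commute on the $p$-variables; up to where one divides by $p^{(k)}$ this is exactly the identity (\ref{prom}) that the paper isolates.

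The gap is that you stop precisely where the proof actually lives. The equality $D_\tau p_x^{(k)}=D_x p_\tau^{(k)}$ is not a formality that merely ``telescopes'': in the paper it rests on (i) the computation (\ref{abx}) of $b_x^{(i)} a_{-1}^{(i)}+b^{(i)} a_{-1,x}^{(i)}$ built from the lengthy formulas for $a_x^{(i)},b_x^{(j)}$ in Proposition \ref{proA}; (ii) the key structural observation that the part of (\ref{abx}) \emph{not} in the image of $(\psh-1)$ is $\mu a_{-1}^{(i)}b^{(i)}$ times an $i$-independent factor, so that it cancels exactly against the term coming from $D_x\langle b,\, a_{-1}\rangle$ --- without this cancellation $D_x\bigl(b^{(i)}a_{-1}^{(i)}/\langle b,\, a_{-1}\rangle\bigr)$ would not visibly land in the image of $(\psh-1)$ and the comparison with the $\tau$-derivative of the right-hand side of (\ref{flowx}) could not proceed; and (iii) the identity (\ref{comm1}) of Proposition \ref{proC}, which matches the result with $\partial_\tau\bigl(a^{(i-1)}b^{(i)}p^{(i-1)}\exp(\phi^{(i-1)})\bigr)$ using (\ref{reab}) and the relation $\mu^2 a^{(i+1)}b^{(i)}=a^{(i)}b^{(i+1)}p^{(i)}p^{(i+1)}$ from (\ref{coar}). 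None of this appears in your proposal beyond the promise of an ``auxiliary lemma''; as written, the theorem has been reduced to an unproved claim of essentially the same difficulty as the theorem itself. (Your fallback via $D_\tau V^{(1)}-D_x U_\mu+[V^{(1)},U_\mu]=0$ would likewise need an argument, since the two zero-curvature conditions (\ref{nonzero}) and (\ref{zero}) do not by themselves imply the compatibility of the $x$- and $\tau$-flows.) To complete the proof you must supply the analogues of (\ref{abx})--(\ref{comm1}).
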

\begin{proof}
To prove this statement we need to show that equation (\ref{flowx}) is invariant along the flow (\ref{flowt}), that is, to check
$$
(\cS\!\!-\!\!1) D_x \phi_{\tau}^{(i)}=-\frac{n}{\mu^2} (\psh-1)^2 \frac{\partial }{\partial \tau} \left(a^{(i-1)} b^{(i)} p^{(i-1)}\exp(\phi^{(i-1)})\right).
$$
Therefore, we only need to show
\begin{eqnarray}
\mu (\cS\!\!-\!\!1) D_x \left(\frac{ b^{(i)} a_{-1}^{(i)}}{<b,\ a_{-1}>} \right)
= (\psh-1) \frac{\partial }{\partial \tau} \left(a^{(i-1)} b^{(i)} p^{(i-1)}\exp(\phi^{(i-1)})\right).\label{prom}
\end{eqnarray}
Using the expressions for $a_x^{(i)}$ and $b_x^{(j)}$ in the proof of Proposition \ref{proA} in Appendix, we get
\begin{eqnarray}
&&b_x^{(i)} a_{-1}^{(i)}+b^{(i)} a_{-1,x}^{(i)}=
(\psh-1)\left((\frac{a_{-1}^{(i-1)} b^{(i)} }{\mu}+\mu a_{-1}^{(i)} b^{(i-1)}) \exp(\phi^{(i-1)})\right)\nonumber\\
&&\qquad + \mu a_{-1}^{(i)} b^{(i)} (\cS-1) \left(\frac{ a_{-1}^{(1)} p_{-1}^{(n)} \exp(\phi_{-1}^{(n)})}{p_{-1}^{(1)} a_{-1}^{(n)} }
-\frac{ b_{-1}^{(n-1)} \exp(\phi_{-1}^{(n-1)}) } {{p_{-1}^{(n)}}^2 b_{-1}^{(n)}} \right) .\label{abx}
\end{eqnarray}
This leads to
\begin{eqnarray*}
&& \frac{\partial}{\partial x} \frac{b^{(i)} a_{-1}^{(i)}}{<b,\ a_{-1}>}
=\frac{b_x^{(i)} a_{-1}^{(i)}+b^{(i)} a_{-1,x}^{(i)}}{<b,\ a_{-1}>}-\frac{b^{(i)} a_{-1}^{(i)}(<b,\ a_{-1,x}>+<b_x,\ a_{-1}>)}{<b,\ a_{-1}>^2}\\
&&\qquad= \frac{1}{<b,\ a_{-1}>}(\psh-1)\left((\frac{a_{-1}^{(i-1)} b^{(i)} }{\mu}+\mu a_{-1}^{(i)} b^{(i-1)}) \exp(\phi^{(i-1)})\right) .
\end{eqnarray*}
To prove the identity (\ref{prom}), we now only need to show that
\begin{eqnarray*}
(\cS\!\!-\!\!1) \frac{1}{<b,\ a_{-1}>} (a_{-1}^{(i-1)} b^{(i)} +\mu^2 a_{-1}^{(i)} b^{(i-1)}) \exp(\phi^{(i-1)})
=  \frac{\partial }{\partial \tau} \left(a^{(i-1)} b^{(i)} p^{(i-1)}\exp(\phi^{(i-1)})\right), 
\end{eqnarray*}
which is the identity (\ref{comm1}) proved in Appendix. 
\end{proof}

\section{Formal diagonalisation of the Lax-Darboux scheme and Conservation laws}\label{Sec4}
In this section, we study the conservation laws for integrable equation (\ref{flowx}). Given the Lax representation of an equation, its
conservation laws, both conserved densities and conserved fluxes, can be computed by preforming formal diagonalisation of its Lax pair \cite{mr86h:58071}.
This idea has been adapted to differential difference and partial difference equations. See, for example, \cite{mik12, mik13}.
Instead of directly working on the Lax pair of (\ref{flowx}), we work on the Lax pair of its symmetry. 
Then we prove that the obtained conserved densities are the conserved densities of (\ref{flowx}).

Consider the gauge transformation $\Psi=W(\lambda) {\tilde \Psi}$. It follows from (\ref{lins}) that the Lax operators transform into 
\begin{eqnarray*}
 M_\mu(\lambda)&\mapsto &
\cM_\mu(\lambda)=\cS(W^{-1}(\lambda))M_\mu(\lambda) W(\lambda);\\
 U_\mu(\lambda)&\mapsto &
 \cU_\mu(\lambda)=W^{-1}(\lambda)U_\mu(\lambda)W(\lambda)-W^{-1}(\lambda)
D_\tau(W(\lambda)).
\end{eqnarray*}
In this section we shall show that both operator $U_\mu(\lambda)$  and the Darboux matrix
$M_\mu(\lambda)$ defined in Theorem \ref{thm3} can be simultaneously brought to a
formal block-diagonal form by a suitable transformation $W(\lambda)$. In other words, we show that there exists $W(\lambda)$
such that
\begin{eqnarray*}
 \cU_\mu(\lambda)&=&\frac{\bpi}{ \lambda-\mu }
+\cU_0+(\lambda-\mu)\cU_1+(\lambda-\mu)^2\cU_2+\cdots;\\
\cM_\mu(\lambda)&=&\frac{<b,\ a_{-1}>\bpi}{\lambda-\mu}
+\cM_0+(\lambda-\mu)\cM_1+(\lambda-\mu)^2\cM_2+\cdots,
\end{eqnarray*}
where $\bpi$ is a matrix (a projector) with 1 at the $(1,1)$ position and
$0$ elsewhere, and also  $\ad_\bpi \cU_k=\ad_\bpi \cM_k=0,\ k=0,1,2,\ldots $,
i.e. the coefficient $n\times n$ matrices $\cU_k$ and $\cM_k$ have a
block-diagonal form
\[ \left(\begin{array}{cccc}
          *&0&\cdots &0\\
0&*&\cdots&*\\
\vdots&\vdots& &\vdots\\
0&*&\cdots&*
         \end{array}\right).
\]
We shall also show that the entries of $\cM_k$ and $\cU_k$ are local, which means that they can be expressed in terms of variables $\phi^{(i)}$ and their $\cS$-shifts.

Obviously, it follows from (\ref{zero}) that the transformed operators also satisfy the zero curvature condition
\begin{equation}
 D_{\tau} \cM_\mu(\lambda)=\cS(\cU_\mu(\lambda))\cM_\mu(\lambda)-\cM_\mu(\lambda)
\cU_\mu(\lambda) . 
\label{McM}
\end{equation}
The projection to the element $(1,1)$ leads to 
\[
  D_\tau m(z)=m(z)(\cS-1)u(z).
\]
Here we introduced a new parameter $z=\lambda-\mu$ and denoted
$$m(z)=(\cM_\mu(z+\mu))_{1,1} \quad  \mbox{and} \quad  u(z)=(\cU_\mu(z+\mu))_{1,1}.$$ 
Thus $\log m(z)$ and $u(z)$ are generating
functions of local conservation laws and corresponding fluxes
\begin{equation}\label{consrho}
D_\tau \rho_k=(\cS-1)\sigma_k,\qquad \log m(z)=-\log(z)+\sum_{k=0}^{\infty}
z^k \rho_k,\quad  u(z)=z^{-1}+\sum_{k=0}^\infty z^k \sigma_k.
\end{equation}
We represent the transformation $W(z+\mu)$ in the form 
\[
 W=W_*\cW,\qquad \cW=I+z W_1+z^2 W_2+z^3 W_3+\cdots,
\]
 where $W_*$ is a
$z$-independent invertible matrix 
and $\cW$ is a formal series in $z$ with ``off block-diagonal'' coefficients
$W_k$ (in the image of $ \ad_\bpi$).  The entries of $W_*$ and $W_k$ are local. 
The gauge transformation $W_*$ brings simultaneously the
residues 
\begin{equation}\label{resUM}
 \res_{\lambda=\mu} U_\mu
(\lambda)=\frac{a_{-1}><b}{<b,\, a_{-1}>}, \qquad
\res_{\lambda=\mu}M_\mu(\lambda)=a><b
\end{equation}
to the diagonal form
$$  W_*^{-1}\frac{a_{-1}><b}{<b,\, a_{-1}>}W_* =\bpi,\quad
\cS(W_*^{-1})a><bW =<b,\ a_{-1}>\bpi,$$
while the formal series $\cW$ takes care of the regular in $z$ parts.

To construct the gauge transformation $W_*$ we note that vector $<b$ is non-zero
and therefore for some $k$ it has a non-zero component $b^{(k)}\ne 0$. Let
$\alpha_k>$ denotes a vector-column with 1 at the $k$-th position and zeros
elsewhere and $<\alpha_k$ is the transpose of $\alpha_k>$, i.e., $<\alpha_k=\left(\alpha_k>\right)^{\T}$. 
Then matrix $W_*$ can be written in the
form
\[
 W_*=\left(b^{(k)} I-\alpha_k><b+a_{-1}><\alpha_k\right)\Delta^{1-k}
\]
where $\Delta$ is a  matrix defined in (\ref{Delta}). One can check that $\det
W_*=(b^{(k)})^{n-2}<b\, ,\,a_{-1}>\ne 0$

The gauge transformation $W_*$
brings the singular parts of $U_\mu(z+\mu)$ and $M_\mu(z+\mu)$ to a diagonal
form
\begin{eqnarray*}
 \hat{\cU}(z)&=&W_*^{-1}U_\mu(z+\mu)W_*-W_*^{-1}D_\tau(W_*)=z^{-1}\bpi
+\hat{\cU}_0+z\hat{\cU}_1+z^2\hat{\cU}_2+\cdots\\
\hat{\cM}(z)&=&\cS(W_*^{-1})M_\mu(z+\mu) W_*=z^{-1}<p_{-1}q>\bpi
+\hat{\cM}_0+z\hat{\cM}_1+z^2\hat{\cM}_2+\cdots\ .
\end{eqnarray*} 
The coefficients of the regular part can be easily found, but they are not yet
in the block-diagonal form. For example, we obtain
\begin{eqnarray*}
 &&\hat{\cU}_0=\frac{1}{<b\, ,\,
a_{-1}>}W_*^{-1}\left(\sum_{i=1}^{n-1}\frac{ Q^ia_ { -1 } ><bQ^ { -i } } { \mu
(\omega^i-1)}-\sum_{i=1}^{n}\frac{\mu
Q^{-i}b><a_{-1}Q^i}{\omega^i-\mu^2}\right)W_*-W_*^{-1}D_\tau(W_{*}),\\
&&\hat{\cM}_0=\cS(W_*^{-1})\left(C+\sum_{i=1}^{n-1}\frac{ Q^ia ><bQ^ { -i } } {
\mu
(\omega^i-1)}\right)W_*\, .
\end{eqnarray*}
In particular, we have
\begin{eqnarray}
 &&(\hat{\cU}_0)_{1,1}=\! - \frac{<b,\, a_{-1.\tau}>}{<b,\,
a_{-1}>}\!  -\! \frac{1}{\mu <b,\,
a_{-1}>^2}\! \sum_{k,q=1}^n\! \! 
\left(\!\Gamma_{k-q}b^{(k)}a_{-1}^{(k)}b^{(q)}a_{-1}^{(q)}-\mu^2\gamma_{q-k}
(\mu^2)(b^ { (k)}a_{-1}^{(q)})^2\right)\!,\label{uhat11}\\
&&(\hat{\cM}_0)_{1,1}=\frac{<b_1,\, C a_{-1}>}{<b_1\, ,\,
a>}  - \frac{1}{\mu <b_1\, ,\,
a>}\sum_{k,q=1}^n \Gamma_{k-q}b_1^{(k)}a^{(k)}b^{(q)}a_{-1}^{(q)},
\label{mhat11}
\end{eqnarray}
where the functions $\Gamma_k,\ \gamma_k(x)$ were introduced in Lemma \ref{lem2} in Appendix.

Now we can construct a formal series $\cW$ which transforms simultaneously
$\hat{\cU}(z)$ and $\hat{\cM}(z)$ into a block-diagonal form.
Actually, it follows from equation (\ref{McM}) that if we formally (block)
diagonalise the Darboux matrix, then the corresponding Lax operator must also be
block-diagonal and vice-versa.

\begin{Pro}\label{pro2}
 There exists a unique formal series $\cW=I+z W_1+z^2 W_2+z^3 W_3+\cdots$ with
$W_k\in \ad_\bpi \mbox{Mat}_{n\times n}(\cF)$ such that all coefficients
$\cM_k$ of
\begin{equation}
 \label{procW}
\cM(z)=\cS(\cW^{-1})\hat{\cM}(z) \cW=z^{-1}<b, \ a_{-1}>\bpi
+\cM_0+z\cM_1+z^2\cM_2+\cdots\ 
\end{equation}
are block-diagonal.
\end{Pro}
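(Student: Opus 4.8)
The plan is to construct the series $\cW$ order by order in $z$, determining the off-block-diagonal coefficients $W_k$ recursively so as to kill the off-diagonal parts of $\hat{\cM}(z)$. First I would substitute the ansatz $\cW=I+zW_1+z^2W_2+\cdots$ with $W_k\in\ad_\bpi\mbox{Mat}_{n\times n}(\cF)$ into the defining relation $\cM(z)\cW^{-1}\cdot$\textellipsis{}, or more precisely rewrite \eqref{procW} as $\cS(\cW)\,\cM(z)=\hat{\cM}(z)\,\cW$, and expand both sides as formal power series in $z$. Because $\hat{\cM}(z)$ has a simple pole $z^{-1}<b,\ a_{-1}>\bpi$ and regular coefficients $\hat{\cM}_0,\hat{\cM}_1,\ldots$ that are already local (computed above via $W_*$), each order in $z$ yields a matrix equation whose unknowns are $W_k$ (appearing in $\cS(W_k)$ and $W_k$) and the block-diagonal $\cM_{k-1}$.

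The key mechanism is the splitting of each order's equation into its block-diagonal part and its off-block-diagonal part relative to $\bpi$. I would collect the coefficient of $z^{k}$ and isolate the highest new unknown $W_k$. The leading singular term $<b,\ a_{-1}>\bpi$ contributes, through the commutator structure, a term of the form $<b,\ a_{-1}>\ad_\bpi W_k$ (equivalently $<b,\ a_{-1}>(\bpi W_k-W_k\bpi)$ together with the shifted copy $\cS(<b,\ a_{-1}>)\ad_\bpi\cS(W_k)$); the point is that $\ad_\bpi$ is invertible on its own image, so on the off-block-diagonal component the equation can be solved uniquely for the combination of $W_k$ and $\cS(W_k)$ in terms of lower-order data. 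The block-diagonal component of the same equation then defines $\cM_{k-1}$ with no further freedom. This is exactly the standard formal-diagonalisation recursion adapted to the difference setting, as in \cite{mr86h:58071, mik12, mik13}.

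The main obstacle I anticipate is twofold. First, because we are in the difference (not purely differential) setting, the recursion at order $z^k$ involves both $W_k$ and its shift $\cS(W_k)$ linearly, so I must check that the resulting linear operator acting on the off-block-diagonal part is invertible in the ring $\mbox{Mat}_{n\times n}(\cF)$ of local functions; this is where uniqueness and locality are genuinely at stake, and I would argue invertibility using that the scalar coefficients $<b,\ a_{-1}>$ and $\cS(<b,\ a_{-1}>)$ are nonzero and that $\ad_\bpi$ restricted to its image is a bijection. Second, I would have to verify by induction that if $\hat{\cM}_j$ and $W_1,\ldots,W_{k-1}$ are local, then $W_k$ and $\cM_{k-1}$ remain local, i.e. polynomial in $\phi^{(i)}$, their $\cS$-shifts, and $\frac{1}{<b,\ a_{-1}>}$ and its shifts; closure of the local ring under the recursion is the crux.

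Finally, I would note that existence and uniqueness for $\cM$ automatically force the corresponding statement for $\cU$: since \eqref{McM} relates $D_\tau\cM_\mu$ to $\cS(\cU_\mu)\cM_\mu-\cM_\mu\cU_\mu$, once $\cM(z)$ is block-diagonalised the same $\cW$ block-diagonalises $\hat{\cU}(z)$, because the zero-curvature identity propagates the block structure (as remarked just before the proposition). Thus the heart of the proof is the inductive solvability of the off-block-diagonal equations for $W_k$; the block-diagonal parts are then read off as $\cM_{k-1}$, and uniqueness of $\cW$ follows from the uniqueness at each order of the recursion.
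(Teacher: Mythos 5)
Your overall strategy coincides with the paper's: rewrite (\ref{procW}) as $\hat{\cM}(z)\,\cW=\cS(\cW)\,\cM(z)$, expand in powers of $z$, and at each order use the off-block-diagonal part to determine the new $W_k$ and the block-diagonal part to read off $\cM_{k-1}$. However, there is a gap at exactly the step you yourself flag as the crux, and the justification you sketch for it would not work. You describe the leading contribution of the simple pole as an operator acting on $W_k$ that involves \emph{both} $W_k$ and $\cS(W_k)$, and you propose to invert it using only that $<b,\ a_{-1}>$ is nonzero and that $\ad_\bpi$ is a bijection on its image. But a linear difference operator of the form $X\mapsto \alpha X+\beta\,\cS(X)$ acting on a single component is \emph{not} invertible over the ring of local functions: its inverse is a formal geometric series in $\cS$, so uniqueness and locality would genuinely fail if the two terms were coupled in the same component. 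Nonvanishing of the scalars does not rescue this.

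The missing observation, which the paper's proof makes explicit through the four-block splitting $\cW=\left(\begin{smallmatrix}1&<q\\ r>&I_{n-1}\end{smallmatrix}\right)$, is that the two occurrences of the new unknown land in \emph{complementary} off-diagonal blocks and therefore never couple. Since the residue of both $\hat{\cM}$ and $\cM$ is $<b,\ a_{-1}>\bpi$ with $\bpi$ the rank-one corner projector, the pole contributes $<b,\ a_{-1}>\,\bpi W_k$ on the left (which sees only the row part $<q_k$, sitting in the $(1,2)$ block) and $<b,\ a_{-1}>\,\cS(W_k)\bpi$ on the right (which sees only the shifted column part $\cS(r_k>)$, sitting in the $(2,1)$ block). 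Hence the $(1,2)$ block gives a purely algebraic equation for $<q_k$, while the $(2,1)$ block determines $\cS(r_k>)$ algebraically, and $r_k>$ is recovered by applying $\cS^{-1}$ and dividing by $\cS^{-1}(<b,\ a_{-1}>)=<b_{-1},\ a_{-2}>$ --- this is precisely the origin of the prefactor $1/<b_{-1},\ a_{-2}>$ and the $\cS^{-1}$ in the paper's recursion for $r_{k+1}>$. With this decoupling in hand, uniqueness and locality follow immediately by induction, and the block-diagonal parts of each order then define $m_{k-1}$ and $N_{k-1}$ with no freedom, as you correctly anticipate. Your closing remark that the same $\cW$ automatically block-diagonalises $\hat{\cU}(z)$ via the zero-curvature relation (\ref{McM}) agrees with the paper.
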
 
\begin{proof}
All matrices in the equation (\ref{procW}) we split into four blocks associated
with the projector$\bpi$:
\[
 \cM(z)=\left(\begin{array}{cc}
m&<0 \\             ×
0>&N            \end{array}\right),\ 
\hat{\cM}(z)=\left(\begin{array}{cc}
\hat{m}&<f \\             ×
g>&\hat{N}             \end{array}\right),\
\cW =\left(\begin{array}{cc}
1&<q \\             ×
r >&I_{n-1}             \end{array}\right),\
\]
where $0>,g >,r >$  and $<0,<f ,<q $ are $n-1$ dimensional column and row
vectors respectively, $N, \hat{N} , I_{n-1}$ are square $(n-1)\times (n-1)$ matrices.

We rewrite equation (\ref{procW})  in the form 
\begin{equation}\label{procW1}
 \hat{\cM}(z) \cW=\cS(\cW)\cM(z)
\end{equation}
and split it into four blocks as above. It leads to one scalar, two vector and
one matrix equations 
\begin{eqnarray}\label{11}
 &&\hat{m}+<f\, ,\, r>=m,\\ \label{21}
&& \hat{N}\, r>+g>=\cS(r>)m,\\ \label{12}
&& \hat{m} <q+<f=\cS(<q)N,\\ \label{22}
&&g><f+\hat{N}=N.
\end{eqnarray}
Using (\ref{11}) and (\ref{22}) we eliminate $m$ and $N$ from ((\ref{21}) and
(\ref{12})
\begin{eqnarray}\label{req}
 \hat{N}\, r>+g>=\cS(r>)\hat{m}+\cS(r>)<f\, ,\, r>\\
 \hat{m} <q+<f=\cS(<q)g><f+\cS(<q)\hat{N}
\end{eqnarray}
Substitution of the formal  expansions
\begin{eqnarray*}
\hat{m}=z^{-1}<b,\ a_{-1}>+\hat{m}_0+z\hat{m}_1+z^2\hat{m}_2+z^3\hat{m}_3\cdots\
,\\
\hat{N}=\hat{N}_0+z\hat{N}_1+z^2\hat{N}_2+z^3\hat{N}_3\cdots\ ,\\
r>=zr_1>+z^2r_2>+z^3r_3>\cdots\ ,\\
<q=z<q_1+z^2<q_2+z^3<q_3\cdots\
\end{eqnarray*}
leads to recurrence relations for determining elements $r_k>$ and
$<q_k$. From (\ref{req}) it follows that 
\begin{equation}
 r_{k+1}>=\frac{1}{<b_{-1}\, ,\,
a_{-2}>}\cS^{-1}\left(g_k>+\sum_{i=0}^{k-1}(\hat{N}_i
r_{k-i}>-\cS(r_{k-i}>\hat{m}_i)-\sum_{i,j=1}^{i+j\le k}\cS(r_i>)<f_{k-i-j}\,
,\, r_j>\right).
\end{equation}
Then the coefficients $m_k$ in the expansion 
\[
 m=z^{-1}<b,\ a_{-1}>+{m}_0+z{m}_1+z^2{m}_2+z^3{m}_3\cdots\
\]
are determined by (\ref{11})
\begin{equation}
\label{mexp}
 m_k=\hat{m}_k+\sum_{i=1}^k <f_{k-i}\, ,\, r_i>. 
\end{equation}
 Coefficients  ${\bf N}_k$ in the expansion ${\bf N}={\bf N}_0+z{\bf
N}_1+z^2{\bf N}_2+\cdots$ and $<q_k$ can be uniquely found from (\ref{12}),
(\ref{22}) in a similar way.
\end{proof}
It follows from (\ref{mexp}) and (\ref{mhat11}) that 
\[
 m=z^{-1}<b,\ a_{-1}> - \frac{1}{\mu <b_1\, ,\,
a>}\sum_{k,q=1}^n \Gamma_{k-q}b_1^{(k)}a^{(k)}b^{(q)}a_{-1}^{(q)}
+\cO(z)
\]
and thus the conserved densities densities (\ref{consrho}) for equation (\ref{eqps}) are of the form 
\begin{eqnarray}
&& \rho_0=\log <b,\ a_{-1}> ,\label{cons1} \\
&& \rho_1=-\frac{1}{\mu <b_1 ,\,a><b,\, a_{-1}> }\sum_{k,q=1}^n
\Gamma_{k-q}b_1^{(k)}a^{(k)}b^{(q)}a_{-1}^{(q)} .\label{cons2}
\end{eqnarray}

We have formally diagonalised the Darboux matrix $M_{\mu}$ in Proposition \ref{pro2}. It follows from (\ref{nonzero}) that
the corresponding Lax operator $V^{(1)}$ must also be block-diagonal. Therefore, we have the following result:
\begin{Cor} The above expressions (\ref{cons1}) and (\ref{cons2}) are conserved densities for the integrable nonevolutionary equation (\ref{flowx})
given in Theorem \ref{thm2}, where $a$ and $b$ are defined in Theorem \ref{cor1}.
\end{Cor}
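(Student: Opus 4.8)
The plan is to exploit the fact that the conserved densities $\rho_0,\rho_1$ are governed entirely by the formal block-diagonalisation of the Darboux matrix $M_\mu$, and not by the particular Lax operator paired with it. In Proposition \ref{pro2} we produced a gauge transformation $W=W_*\cW$, with local coefficients, bringing $M_\mu$ to the block-diagonal form $\cM_\mu(\lambda)=\cS(\cW^{-1})\hat\cM(z)\cW$; the scalar $m(z)=(\cM_\mu)_{1,1}$ depends only on $M_\mu$, and its logarithm yields the densities via (\ref{consrho}). The operator $U_\mu$ entered only through the $\tau$-flux. The idea is therefore to repeat the $(1,1)$-projection, but starting from the $x$-direction zero curvature relation (\ref{nonzero}) instead of (\ref{McM}), using the \emph{same} $W$.

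Concretely, I would apply $W$ to (\ref{nonzero}) and set $\cV^{(1)}=W^{-1}V^{(1)}W-W^{-1}D_x(W)$, so that the transformed relation reads $D_x\cM_\mu=\cS(\cV^{(1)})\cM_\mu-\cM_\mu\cV^{(1)}$. Since $V^{(1)}$ is regular at $\lambda=\mu$ and $W$ is a regular formal series in $z=\lambda-\mu$, the matrix $\cV^{(1)}=\sum_{k\ge0}z^k\cV^{(1)}_k$ is a power series with local entries. The key step is to show that $\cV^{(1)}$ is block-diagonal; this is precisely the ``vice-versa'' assertion already invoked in Section \ref{Sec4}. Because $\cM_\mu$ is block-diagonal and $D_x$ acts entrywise, the off-block-diagonal part of the transformed relation collapses to $\cS(\cV^{(1)}_o)\cM_\mu-\cM_\mu\cV^{(1)}_o=0$, where $\cV^{(1)}_o=\ad_\bpi\cV^{(1)}$. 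The coefficient of $z^{-1}$, supplied by the residue $<b,\ a_{-1}>\bpi$ of $\cM_\mu$, reads $\cS(\cV^{(1)}_{o,0})\bpi=\bpi\,\cV^{(1)}_{o,0}$; the right-hand side is supported in the first row and the left-hand side in the first column, so both must vanish, forcing the leading off-block-diagonal coefficient $\cV^{(1)}_{o,0}$ to be zero. An induction on the order in $z$, in which the inner-product factor $<b,\ a_{-1}>$ stays nonzero and all lower coefficients already vanish, then gives $\cV^{(1)}_{o,k}=0$ for every $k$, establishing block-diagonality.

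Once this is settled, the $(1,1)$ entry of the transformed relation decouples to $D_x m(z)=m(z)(\cS-1)v(z)$ with $v(z)=(\cV^{(1)})_{1,1}$, hence $D_x\log m(z)=(\cS-1)v(z)$. As $m(z)$ is literally the same scalar as in Section \ref{Sec4} and $v(z)=\sum_{k\ge0}z^k v_k$ is regular at $z=0$, matching powers of $z$ in $\log m(z)=-\log z+\sum_k z^k\rho_k$ yields $D_x\rho_k=(\cS-1)v_k$ for all $k$. In particular $\rho_0$ and $\rho_1$ of (\ref{cons1})--(\ref{cons2}) are conserved densities of the $x$-flow, which is exactly the nonevolutionary equation (\ref{flowx}) since (\ref{nonzero}) is its Lax--Darboux representation. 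I expect the sole genuine obstacle to be the block-diagonalisation of $\cV^{(1)}$; the locality of the fluxes $v_k$ is immediate from the locality of $W$ and $V^{(1)}$, and the power-matching is routine.
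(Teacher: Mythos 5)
Your proposal is correct and follows essentially the same route as the paper: the paper's proof of this corollary is precisely the observation that, once $M_\mu$ has been block-diagonalised by $W$ (Proposition \ref{pro2}), the relation (\ref{nonzero}) forces the gauge-transformed $V^{(1)}$ to be block-diagonal as well, after which the $(1,1)$-projection yields $D_x\log m(z)\in\mathrm{Im}(\cS-1)$ with the same densities $\rho_k$. Your induction on powers of $z$ (using the disjoint supports of $X\bpi$ and $\bpi X$ for $X\in\mathrm{im}\,\ad_\bpi$ and the nonvanishing of $<b,\ a_{-1}>$) correctly fills in the "vice-versa" step that the paper only asserts; the paper additionally supplements this with a direct check of $D_x\rho_0\in\mathrm{Im}(\cS-1)$ via (\ref{abx}), which your argument does not need.
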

In fact, we can directly check $D_x \rho_k \in {\rm Im}(\cS-1)$ for $k=0,1$ for equation (\ref{flowx}). Here we only present it for $k=0$. Using (\ref{abx}) we have
\begin{eqnarray*}
&&D_x \rho_0=\frac{1}{<b,\ a_{-1}>} \sum_{i=1}^n\left(b_x^{(i)} a_{-1}^{(i)}+b^{(i)} a_{-1,x}^{(i)}\right)\\
&&\quad=\mu (\cS-1) \left(\frac{ a_{-1}^{(1)} p_{-1}^{(n)} \exp(\phi_{-1}^{(n)})}{p_{-1}^{(1)} a_{-1}^{(n)} }
-\frac{ b_{-1}^{(n-1)} \exp(\phi_{-1}^{(n-1)}) } {{p_{-1}^{(n)}}^2 b_{-1}^{(n)}} \right) .
\end{eqnarray*}

\section{Discussion}\label{Sec5}
In this paper, we construct a Darboux transformation with Dihedral reduction group for the $2$-dimensional generalisation of the Volterra
lattice (\ref{2+1}) with period  $n$.  The reduction group enables us to parametrise the Darboux matrix by $n-1$ dependent variables.
The Dihedral reduction group is generated by both inner automorphisms and outer automorphisms. 
To the best of our knowledge, this is the first example
to deal with arbitrary $n$ and the outer automorphisms. 

The B{\"a}cklund transformation resulting from the Darboux transformation can be viewed as a nonevolutionary multi-component integrable differential difference
equation. Assuming the Lax operator having the same simple poles as the logarithmic derivative of Darboux matrix, we obtain its local higher symmetry. 
In the similar way, local symmetries can be found for the ABS equations \cite{abs,X}. 

In this paper, we only investigated the conservation laws of this nonevolutionary multi-component integrable differential difference
equation. To obtain them we transform the Darboux matrix and the Lax operator into block-diagonal form.
The corresponding recursion operator and bi-Hamiltonian structure are not studied yet.
Knowing the Lax representation, in principle, we are able to construct the recursion operator for  a fixed period $n$ as it was done for equation 
(\ref{2+1}) in \cite{wang09} applying the method given in \cite{mr2001h:37146}. It would be interesting to see whether that can be done for arbitrary
$n$ as in the case for the Narita-Itoh-Bogoyavlensky lattice \cite{wang12}.

Following from the Bianchi commutativity, we can construct a new integrable discrete equation with Dihedral reduction group using the Darboux matrix, which 
we have not included in this paper since we could not write it down in a neat way. The integrable nonevolutionary system and the symmetry flow are its nonlocal and 
local symmetry, respectively. It would be interesting to find continuous limits of the systems obtained.

\section*{Appendix: Technical results used in the proofs}
We first give two simple lemmas, which will be used to simplify expression throughout the paper.
We then give the detailed computation to check the consistency required for the zero curvature conditions.
\begin{Lem}\label{lem1} Let $P$ be an $n\times n$ matrix and $\omega=\exp\frac{2\pi i}{n}$. Then 
\begin{equation}
 \frac{1}{n} \sum_{m=1}^{n} \omega^{km} Q^{m} P Q^{-m}=\sum_{i=1}^{n}
P_{i,i+k}{\bf e}_{i,i+k},
\end{equation}
where all indexes are counted modulo $n$, matrix ${\bf e}_{i,j}$ has a unit
entry at the position  $(i,j)$ and zero elsewhere.  In the case $k=0$ it is a
projection to the diagonal part of the matrix $P$, i.e.,
$$ \frac{1}{n} \sum_{m=1}^{n}
Q^{m} P Q^{-m}=\mbox{diag}\, P.$$
\end{Lem}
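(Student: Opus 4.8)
The plan is to compute the $(i,j)$ entry of the matrix $\frac{1}{n}\sum_{m=1}^n \omega^{km} Q^m P Q^{-m}$ directly and show it collapses to the claimed form. First I would recall that $Q=\diag(\omega^i)$, so that conjugation by $Q^m$ acts diagonally on matrix entries: $(Q^m P Q^{-m})_{i,j}=\omega^{mi}P_{i,j}\omega^{-mj}=\omega^{m(i-j)}P_{i,j}$. Substituting this into the sum, the $(i,j)$ entry becomes
\begin{equation*}
\frac{1}{n}\sum_{m=1}^n \omega^{km}\,\omega^{m(i-j)}P_{i,j}=\left(\frac{1}{n}\sum_{m=1}^n \omega^{m(k+i-j)}\right)P_{i,j}.
\end{equation*}

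The key step is then the standard geometric-sum identity for $n$-th roots of unity: since $\omega=\exp(2\pi i/n)$ is a primitive $n$-th root of unity, the averaged sum $\frac{1}{n}\sum_{m=1}^n \omega^{m\ell}$ equals $1$ when $\ell\equiv 0 \pmod n$ and equals $0$ otherwise (in the latter case the partial sums of a full period of a nontrivial root of unity cancel). Applying this with $\ell=k+i-j$, I would conclude that the $(i,j)$ entry survives precisely when $j\equiv i+k \pmod n$, in which case it equals $P_{i,j}=P_{i,i+k}$, and vanishes otherwise. Recollecting the surviving entries across all $i$ gives exactly $\sum_{i=1}^n P_{i,i+k}\,{\bf e}_{i,i+k}$, which is the stated formula; the indices are read modulo $n$ throughout, consistent with the paper's convention.

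For the $k=0$ special case I would simply note that the surviving entries are those with $j\equiv i \pmod n$, i.e. the diagonal entries $P_{i,i}$, so the sum projects $P$ onto its diagonal part, $\frac{1}{n}\sum_{m=1}^n Q^m P Q^{-m}=\diag\, P$. I do not anticipate any genuine obstacle here: the entire argument is a one-line entrywise computation followed by the root-of-unity orthogonality relation. The only point requiring minor care is the bookkeeping of indices modulo $n$ — in particular that the congruence $k+i-j\equiv 0$ must be interpreted cyclically so that ${\bf e}_{i,i+k}$ is well-defined even when $i+k$ exceeds $n$ — but this is exactly the modular-index convention the paper has already fixed, so it causes no difficulty.
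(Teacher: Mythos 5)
Your proof is correct and follows essentially the same route as the paper: compute the $(i,j)$ entry as $\omega^{m(k+i-j)}P_{i,j}$, sum over $m$, and invoke the orthogonality of roots of unity to see that only the entries with $j\equiv i+k \pmod n$ survive. No issues.
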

\begin{proof}
Let us compute the $(i,j)$ entry of $\sum_{m=1}^{n}\omega^{km} Q^{m} P 
Q^{-m}$, which equals to
\begin{eqnarray*}
\sum_{m=1}^{n} \omega^{ km+im-jm} P_{i,j}
=\left\{\begin{array}{ll} 0 & k+i-j\not\equiv 0\ \mbox{mod}\, n \\
     n P_{i,j}   & k+i-j\equiv 0\ \mbox{mod}\, n 
       \end{array}\right.\, .
\end{eqnarray*}
This is exactly the right hand of the identity in the statement.
\end{proof}

\begin{Lem}\label{lem2}
Let $\omega=\exp\frac{2\pi i}{n}$. Then 
$$\gamma_l(x)=\sum_{j=0}^{n-1} \frac{\omega^{lj}}{x-\omega^j}=\frac{n
x^{(l-1)\!\!\! \mod\! n}}{x^n-1}
\quad \mbox{and} \quad \Gamma_l=\sum_{j=1}^{n-1}
\frac{\omega^{lj}}{1-\omega^j}=(l-1)\!\!\! \mod\! n-\frac{n-1}{2}.$$
\end{Lem}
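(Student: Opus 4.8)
The plan is to exploit the fact that $\{\omega^j : j=0,1,\dots,n-1\}$ is precisely the set of $n$-th roots of unity, so that $x^n-1=\prod_{j=0}^{n-1}(x-\omega^j)$ with each $\omega^j$ a simple root. For the first identity I would set $m=(l-1)\bmod n$, so that $0\le m\le n-1$ and hence $\frac{n x^m}{x^n-1}$ is a proper rational function whose poles are exactly the simple poles at the $\omega^j$. I would then read off its partial fraction decomposition $\frac{nx^m}{x^n-1}=\sum_{j=0}^{n-1}\frac{c_j}{x-\omega^j}$ by taking residues: since the residue at a simple zero $\omega^j$ of $x^n-1$ is $\frac{n\omega^{jm}}{n\omega^{j(n-1)}}$, we get $c_j=\omega^{j(m+1-n)}=\omega^{j(m+1)}$, using $\omega^n=1$ in the last step. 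Because $m+1\equiv l\pmod n$, this is $c_j=\omega^{jl}$, which is exactly $\gamma_l(x)$.

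For the second identity I would recover $\Gamma_l$ as a regularised limit of $\gamma_l(x)$ at $x=1$. The $j=0$ term of $\gamma_l$ contributes $\frac{1}{x-1}$, the only singularity at $x=1$, so after subtracting it one has $\Gamma_l=\lim_{x\to 1}\bigl(\gamma_l(x)-\frac{1}{x-1}\bigr)$. Combining the first identity with $\frac{1}{x-1}=\frac{\sum_{k=0}^{n-1}x^k}{x^n-1}$ turns this into $\lim_{x\to 1}\frac{n x^m-\sum_{k=0}^{n-1}x^k}{x^n-1}$, a $\tfrac{0}{0}$ form. One application of L'H\^opital's rule (or a first-order Taylor expansion at $x=1$) settles it: the derivative of the numerator at $x=1$ is $nm-\frac{n(n-1)}{2}$ and that of the denominator is $n$, giving $m-\frac{n-1}{2}=(l-1)\bmod n-\frac{n-1}{2}$.

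Both parts are essentially mechanical once the reduction to partial fractions is in place, so I do not anticipate a genuine obstacle; the one point requiring care is the bookkeeping of the residue exponent modulo $n$. Specifically, one must check that replacing $l$ by its representative $m=(l-1)\bmod n$ is consistent, i.e. that both sides are periodic in $l$ with period $n$ (the sums $\sum_j \omega^{jl}/\cdots$ manifestly are, since $\omega^{jl}$ depends only on $l\bmod n$), and that the simplification $\omega^{j(m+1-n)}=\omega^{j(m+1)}=\omega^{jl}$ is exact. This is where an off-by-one or sign error is most likely to slip in, so I would explicitly verify the residue $Q'(\omega^j)=n\omega^{j(n-1)}$ of $x^n-1$ and the congruence $m+1\equiv l\pmod n$ before concluding.
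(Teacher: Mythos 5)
Your proof is correct, and for the first identity it takes a genuinely different (though equally elementary) route from the paper's. The paper first establishes $\gamma_0(x)=\frac{nx^{n-1}}{x^n-1}$ by logarithmically differentiating $x^n-1=\prod_{j=0}^{n-1}(x-\omega^j)$, and then reduces general $l$ to the case $l=0$ via the factorisation $\omega^{lj}-x^l=(\omega^j-x)\sum_{r=0}^{l-1}\omega^{rj}x^{l-1-r}$ together with the orthogonality relation $\sum_j\omega^{rj}=0$ for $r\not\equiv 0 \pmod n$. You instead start from the right-hand side $\frac{nx^m}{x^n-1}$ with $m=(l-1)\bmod n$ and read off its partial-fraction coefficients by residues, $c_j=\frac{n\omega^{jm}}{n\omega^{j(n-1)}}=\omega^{j(m+1)}=\omega^{jl}$; this is a one-step argument that avoids the telescoping identity but requires the (correct) observation that $\deg(nx^m)<n$ so there is no polynomial part, and the careful exponent bookkeeping you flag. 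The paper's route has the minor advantage of producing $\gamma_l(x)=\frac{nx^{l-1}}{x^n-1}$ directly for $0\le l\le n-1$ and then invoking periodicity in $l$, whereas yours builds the periodicity in from the start. For the second identity your argument — subtract the $j=0$ term, write $\frac{1}{x-1}=\frac{\sum_{k=0}^{n-1}x^k}{x^n-1}$, and resolve the $\frac{0}{0}$ limit at $x=1$ by L'H\^opital — is essentially identical to the paper's. No gaps.
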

\begin{proof}
We first prove the identity for $\gamma_0(x)$. Since $\omega$ is a primitive
$n^{\rm th}$ root of unity, we have 
\begin{eqnarray}\label{prod}
 x^n-1=\prod_{j=0}^{n-1}(x-\omega^j)
\end{eqnarray}
We now take logarithm of both sides of (\ref{prod}) and get
$
\ln (x^n-1)=\sum_{j=0}^{n-1} \ln (x-\omega^j).
$
We then differentiate it with respect to $x$. This leads to the value for $l=0$, that is,
\begin{eqnarray*}
 \gamma_0(x)=\sum_{j=0}^{n-1} \frac{1}{x-\omega^j}=\frac{
nx^{n-1}}{x^n-1}=\frac{n x^{(-1)\!\!\! \mod\! n}}{x^n-1}.
\end{eqnarray*}
Notice that  $(\omega^{lj}-x^l)=(\omega^j-x) \sum_{r=0}^{l-1} \omega^{rj} x^{l-1-r}$ for $0<l<n-1$. Thus we have
\begin{eqnarray*}
&& \gamma_l(x)=\sum_{j=0}^{n-1} \frac{\omega^{lj}}{x-\omega^j}=\sum_{j=0}^{n-1}
\frac{x^l}{x-\omega^j}+\sum_{j=0}^{n-1} \frac{\omega^{lj}-x^l}{x-\omega^j}\\
&&\quad = x^l \gamma_0(x)-\sum_{j=0}^{n-1} \sum_{r=0}^{l-1} \omega^{rj}
x^{l-1-r}=\frac{n x^{n+l-1}}{x^n-1}-n x^{l-1}=\frac{n x^{l-1}}{x^n-1}
=\frac{n x^{(l-1)\!\!\! \mod\! n}}{x^n-1}.
\end{eqnarray*}
We know that  $\omega^r=\omega^l$ if $l-r\equiv0 \!\! \mod \! n$. Thus
$\gamma_l(x)=\gamma_r(x)$ if $r\equiv l\!\!\mod \! n$.

We now prove the second part of the statement based on the formula for
$\gamma_l(x)$. We have
\begin{eqnarray*}
\Gamma_l=\sum_{j=1}^{n-1} \frac{\omega^{lj}}{1-\omega^j}&=&\lim_{x\rightarrow
1}\sum_{j=1}^{n-1} \frac{\omega^{lj}}{x-\omega^j}=
\lim_{x\rightarrow 1}\left(\frac{n x^{(l-1)\!\!\! \mod\! n}}{x^n-1}-\frac{1}{x-1} \right)\\
&=&\lim_{x\rightarrow 1}\frac{n x^{(l-1)\!\!\! \mod\! n}-\sum_{l=0}^{n-1}x^l}{x^n-1}
=(l-1)\!\!\! \mod\! n-\frac{n-1}{2}
\end{eqnarray*}
and hence we complete the proof.
\end{proof}

In Theorem \ref{cor1}, we give the relation between the entries of matrix $A$ and those of matrix $C$. In the following lemma we give the formulas 
on their derivatives.
\begin{Lem}\label{lem3}
Consider $p^{(j)}$, $j=1,2, \cdot, n$, are the smooth function of variable $\tau$. Then
\begin{eqnarray}
&&a_{\tau}=a^{(i)} \left(\frac{p^{(i)} p_{\tau}^{(i)} } { {p^{(i)}}^2-1 }
+ \sum_{r=1}^{i-1} \frac{p_{\tau}^{(r)}}{p^{(r)}} -\frac{ p_{\tau}^{(n)} } 
{p^{(n)}( {p^{(n)}}^2 -1) }\right); \label{atau}\\
&&b_{\tau}=b^{(i)}\left(\frac{a_{\tau}^{(i)}}{ a^{(i)}}+\sum_{r=i}^{n-1}\frac{2 p_{\tau}^{(r)}}{p^{(r)}}
-\frac{p_{\tau}^{(i)}}{p^{(i)}} +\frac{2 p^{(n)} p_{\tau}^{(n)} } 
{ {p^{(n)}}^2 -1 }\right). \label{btau}
\end{eqnarray}
\end{Lem}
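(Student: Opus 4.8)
The plan is to derive both identities by logarithmic differentiation of the closed-form expressions (\ref{consb}) and (\ref{coar}) from Theorem \ref{cor1}, treating $\mu$ as a constant parameter and keeping the constraint $\prod_{l=1}^{n}{p^{(l)}}^2=\mu^{2n}$ in force throughout. Since both $a^{(i)}$ and $b^{(i)}$ are written as products and quotients of the $p^{(l)}$, their $\tau$-derivatives are governed entirely by the logarithmic derivatives $p_\tau^{(l)}/p^{(l)}$, and the only nontrivial work is to simplify the single term that comes from the denominator in (\ref{consb}).

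First I would take the logarithm of the squared formula (\ref{consb}),
\[
2\log a^{(i)}=\log({p^{(i)}}^2-1)+2(n-i)\log\mu+\sum_{l=1}^{i-1}2\log p^{(l)}-\log\Big(\mu^{2n}-\prod_{l=1}^{n-1}{p^{(l)}}^2\Big),
\]
and differentiate in $\tau$. The first three terms on the right immediately yield $\tfrac{2p^{(i)}p_\tau^{(i)}}{{p^{(i)}}^2-1}$ and $\sum_{l=1}^{i-1}\tfrac{2p_\tau^{(l)}}{p^{(l)}}$, so all the effort is concentrated in the last term. Writing $P=\prod_{l=1}^{n-1}{p^{(l)}}^2$, its logarithmic derivative is $P_\tau/P=\sum_{l=1}^{n-1}2p_\tau^{(l)}/p^{(l)}$, so the last term contributes $+P_\tau/(\mu^{2n}-P)$. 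Here I would invoke two consequences of the constraint: differentiating $\prod_{l=1}^{n}{p^{(l)}}^2=\mu^{2n}$ gives $\sum_{l=1}^{n}p_\tau^{(l)}/p^{(l)}=0$, hence $\sum_{l=1}^{n-1}p_\tau^{(l)}/p^{(l)}=-p_\tau^{(n)}/p^{(n)}$; and the same constraint in algebraic form gives $P=\mu^{2n}/{p^{(n)}}^2$, hence $\mu^{2n}-P=\mu^{2n}({p^{(n)}}^2-1)/{p^{(n)}}^2$. Substituting both, the denominator term collapses to $-\tfrac{2p_\tau^{(n)}}{p^{(n)}({p^{(n)}}^2-1)}$, and dividing the whole identity by $2$ yields (\ref{atau}).

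For the second identity I would take the logarithm of (\ref{coar}),
\[
\log b^{(i)}=\log a^{(i)}+\sum_{l=i}^{n-1}2\log p^{(l)}+\log({p^{(n)}}^2-1)-\log n-(2(n-i)-1)\log\mu-\log p^{(i)},
\]
and differentiate term by term. This step is purely routine: the $\tau$-derivative produces exactly $\tfrac{a_\tau^{(i)}}{a^{(i)}}+\sum_{r=i}^{n-1}\tfrac{2p_\tau^{(r)}}{p^{(r)}}+\tfrac{2p^{(n)}p_\tau^{(n)}}{{p^{(n)}}^2-1}-\tfrac{p_\tau^{(i)}}{p^{(i)}}$, which is precisely (\ref{btau}). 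No further use of the constraint is required here, because the factor $a_\tau^{(i)}/a^{(i)}$ is left in place rather than expanded via (\ref{atau}).

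The only genuine step, and hence the main obstacle, is the denominator simplification in the middle paragraph: differentiating $\mu^{2n}-P$ correctly and then using both the differentiated constraint and the algebraic relation $P=\mu^{2n}/{p^{(n)}}^2$ to turn the sum over $l=1,\dots,n-1$ into the single clean term involving only $p^{(n)}$. Everything else reduces to elementary logarithmic differentiation, so once this bookkeeping is done the two formulas follow at once.
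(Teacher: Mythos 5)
Your proposal is correct and follows essentially the same route as the paper: both differentiate the closed-form expressions (\ref{consb}) and (\ref{coar}) in $\tau$ and use the differentiated constraint $\prod_{l=1}^{n}{p^{(l)}}^2=\mu^{2n}$ (equivalently $\sum_{l=1}^{n}p_\tau^{(l)}/p^{(l)}=0$ and $\prod_{l=1}^{n-1}{p^{(l)}}^2=\mu^{2n}/{p^{(n)}}^2$) to collapse the denominator term into the single $p^{(n)}$ contribution. The only difference is presentational — you take logarithms first, while the paper applies the quotient rule directly and then substitutes (\ref{consb}) back in to simplify.
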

\begin{proof} We differentiate both sides of formula (\ref{consb}) in Theorem \ref{cor1} with respect to $\tau$ and we get
\begin{eqnarray*}
&&a_{\tau}^{(i)}=\frac{p^{(i)} p_{\tau}^{(i)} \mu^{2(n-i)} \prod_{l=1}^{i-1} {p^{(l)}}^2 } { a^{(i)}(\mu^{2 n}-\prod_{l=1}^{n-1} {p^{(l)}}^2) }
+\frac{({p^{(i)}}^2-1) \mu^{2(n-i)} \prod_{l=1}^{i-1} {p^{(l)}}^2 \sum_{r=1}^{i-1} \frac{p_{\tau}^{(r)}}{p^{(r)}}} { a^{(i)}(\mu^{2 n}-\prod_{l=1}^{n-1} {p^{(l)}}^2) } \nonumber\\
&&\qquad-\frac{({p^{(i)}}^2-1) \mu^{2(n-i)} \prod_{l=1}^{i-1} {p^{(l)}}^2 p_{\tau}^{(n)} \prod_{r=1}^{n-1} {p^{(r)}}^2} 
{a^{(i)} p^{(n)}(\mu^{2 n}-\prod_{l=1}^{n-1} {p^{(l)}}^2)^2 }.
\end{eqnarray*}
Here we used the identity $\{\prod_{r=1}^{n-1} {p^{(r)}}^2\}_{\tau}=-\frac{2 p_{\tau}^{(n)} }{p^{(n)}} \prod_{r=1}^{n-1} {p^{(r)}}^2$ obtained directly from 
differentiating $\prod_{r=1}^{n} {p^{(r)}}^2=\mu^{2n} $ with respect to $\tau$. We then substitute formula (\ref{consb}) into it to get (\ref{atau}).

Similarly, we differentiate both sides of formula (\ref{coar}) in Theorem \ref{cor1} with respect to $\tau$ and we have
\begin{eqnarray*}
&&b_{\tau}^{(i)}=\frac{ a_{\tau}^{(i)} \prod_{l=i}^{n-1} {p^{(l)}}^2 ({p^{(n)}}^2-1)}{n \mu^{2(n-i)-1} p^{(i)} }
+\frac{a^{(i)} \prod_{l=i}^{n-1} {p^{(l)}}^2 ({p^{(n)}}^2-1)}{n \mu^{2(n-i)-1} p^{(i)} } \left(\sum_{r=i}^{n-1}\frac{2 p_{\tau}^{(r)}}{p^{(r)}}
-\frac{p_{\tau}^{(i)}}{p^{(i)}}\right)\\
&&\qquad+\frac{2 a^{(i)} \prod_{l=i}^{n-1} {p^{(l)}}^2 p^{(n)} p_{\tau}^{(n)}}{n \mu^{2(n-i)-1} p^{(i)} }.
\end{eqnarray*}
Substitute formula (\ref{coar}) into it we obtain (\ref{btau}).
\end{proof}

\begin{Pro}\label{proA}
For the matrix $M_{\mu}$ satisfying Theorem \ref{cor1}, the identity (\ref{non4}) holds given formula (\ref{darb}).
\end{Pro}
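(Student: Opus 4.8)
The plan is to verify the matrix identity (\ref{non4}) entry by entry and to exploit the rank-one structure $A=a><b$. First I would expand the $(i,j)$ entry of the right-hand side of (\ref{non4}): since $(\Delta A)_{ij}=A_{i+1,j}$, $(A\bu\Delta)_{ij}=a^{(i)}b^{(j-1)}\exp(\phi^{(j-1)})$, and similarly for the $\Delta^{-1}$ terms, the entry collects into a piece carrying the row index $i$ times $b^{(j)}$ plus a piece carrying the column index $j$ times $a^{(i)}$, namely $b^{(j)}F^{(i)}+a^{(i)}G^{(j)}$ with
$$F^{(i)}=\mu a^{(i+1)}\exp(\phi_1^{(i)})-\frac{1}{\mu} a^{(i-1)}\exp(\phi_1^{(i-1)}),\qquad G^{(j)}=\frac{1}{\mu} b^{(j+1)}\exp(\phi^{(j)})-\mu b^{(j-1)}\exp(\phi^{(j-1)}).$$
Because $A_x=a_x><b+a><b_x$, it therefore suffices to prove that the derivatives computed from the parametrisation satisfy $a_x^{(i)}=F^{(i)}+\kappa\,a^{(i)}$ and $b_x^{(j)}=G^{(j)}+\kappa'\,b^{(j)}$ for some scalars $\kappa,\kappa'$ independent of the upper index, with $\kappa+\kappa'=0$; indeed these give $A_x=F><b+a><G+(\kappa+\kappa')A$, which equals the right-hand side of (\ref{non4}) exactly when $\kappa+\kappa'=0$. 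The scalar ambiguity reflects the gauge $a\mapsto e^{g}a,\ b\mapsto e^{-g}b$ that leaves $A$ unchanged.

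Next I would compute $a_x^{(i)}$ and $b_x^{(j)}$ by differentiating (\ref{consb}) and (\ref{coar}) with respect to $x$; this is precisely Lemma \ref{lem3} with $\tau$ replaced by $x$ and $p_\tau^{(k)}$ replaced by the known $p_x^{(k)}$ from (\ref{darb}). The key simplification is a telescoping: setting $T^{(r)}=n\,a^{(r)}b^{(r-1)}\exp(\phi^{(r-1)})/p^{(r)}$, relation (\ref{indc}) in the form $p^{(r+1)}=p^{(r)}\exp(\phi^{(r)}-\phi_1^{(r)})$ shows $p_x^{(r)}/p^{(r)}=T^{(r+1)}-T^{(r)}$, so the running sum in (\ref{atau}) collapses to $T^{(i)}-T^{(1)}$. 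Using (\ref{reab}) to replace ${p^{(i)}}^2-1$ by $\frac{n}{\mu}a^{(i)}b^{(i)}p^{(i)}$ in the leading term, together with the ratio $b^{(i-1)}/b^{(i)}=(a^{(i-1)}/a^{(i)})\,p^{(i-1)}p^{(i)}/\mu^2$ read off from (\ref{coar}) and one further use of (\ref{indc}), the leading and telescoped pieces combine into exactly $F^{(i)}$ plus a remainder $\kappa\,a^{(i)}$, where $\kappa=-T^{(1)}-p_x^{(n)}/(p^{(n)}({p^{(n)}}^2-1))$ is manifestly independent of $i$.

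The same bookkeeping applied to (\ref{coar}) produces $b_x^{(j)}$; after substituting $p_x^{(r)}/p^{(r)}=T^{(r+1)}-T^{(r)}$ and simplifying with (\ref{reab}), (\ref{consb}), (\ref{coar}) and (\ref{indc}), all terms carrying the index $j$ reorganise into $G^{(j)}$, leaving an index-free remainder $\kappa'\,b^{(j)}$. Finally I would differentiate the constraint $\prod_{l=1}^{n}{p^{(l)}}^2=\mu^{2n}$ to get $\sum_{l=1}^{n}p_x^{(l)}/p^{(l)}=0$, which by the telescoping gives $p_x^{(n)}/p^{(n)}=T^{(1)}-T^{(n)}$ and hence the closed forms $\kappa=(T^{(n)}-{p^{(n)}}^2T^{(1)})/({p^{(n)}}^2-1)$ and $\kappa'=\kappa+2T^{(n)}+2{p^{(n)}}^2(T^{(1)}-T^{(n)})/({p^{(n)}}^2-1)$. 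Adding these, the dependence on $T^{(1)}$ and $T^{(n)}$ cancels and $\kappa+\kappa'=0$ identically. I expect the telescoping and, especially, the index cancellations in the $b$-computation to be the main obstacle: these steps are mechanical but require careful use of (\ref{reab}), (\ref{indc}), (\ref{consb}) and (\ref{coar}) to recognise that the many resulting terms organise into $F^{(i)}$, $G^{(j)}$ and the two scalars $\kappa,\kappa'$. Once that structure is exposed, $\kappa+\kappa'=0$ is a one-line algebraic check.
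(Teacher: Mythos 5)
Your proposal is correct and takes essentially the same route as the paper: both differentiate the parametrisation (\ref{consb})--(\ref{coar}) (Lemma \ref{lem3} with $\tau$ replaced by $x$), substitute $p_x^{(k)}$ from (\ref{darb}), telescope the sums, and simplify with (\ref{reab}), (\ref{indc}) and the ratio $\mu^2 a^{(i)}b^{(i-1)}=a^{(i-1)}b^{(i)}p^{(i-1)}p^{(i)}$ from (\ref{coar}); the paper's explicit expressions for $a_x^{(i)}$ and $b_x^{(j)}$ are exactly your $F^{(i)}+\kappa a^{(i)}$ and $G^{(j)}-\kappa b^{(j)}$. Your gauge-freedom packaging with $\kappa+\kappa'=0$ is simply a tidier presentation of the cancellation the paper carries out when forming $a_x^{(i)}b^{(j)}+a^{(i)}b_x^{(j)}$.
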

\begin{proof}
To compute the left-hand side of (\ref{non4}), we first compute $a_x$ and $b_x$. Using the above lemma, Theorem \ref{cor1} and (\ref{darb}), we have
\begin{eqnarray*}
&&a_x^{(i)} =a^{(i)} \left(\frac{p^{(i)} p_x^{(i)} } { {p^{(i)}}^2-1 }
+ \sum_{r=1}^{i-1} \frac{p_x^{(r)}}{p^{(r)}} -\frac{ p_x^{(n)} } 
{p^{(n)}( {p^{(n)}}^2 -1) }\right)\\
&&\quad = n a^{(i)} \frac{p^{(i)} \left( a^{(i+1)} b^{(i)} \exp(\phi_1^{(i)})-a^{(i)} b^{(i-1)} \exp(\phi^{(i-1)}) \right) } { {p^{(i)}}^2-1 }
\quad ({p^{(i)}}^2-1=\frac{n}{\mu}a^{(i)}b^{(i)}p^{(i)}) \\
&&\qquad+n a^{(i)} \sum_{r=1}^{i-1} \frac{ a^{(r+1)} b^{(r)} \exp(\phi_1^{(r)})-a^{(r)} b^{(r-1)} \exp(\phi^{(r-1)}) }{p^{(r)}} 
\quad (\frac{\exp(\phi_1^{(r)})}{p^{(r)}}=\frac{\exp(\phi^{(r)})}{p^{(r+1)}})\\
&&\qquad-n a^{(i)}\frac{ \left( a^{(1)} b^{(n)} \exp(\phi_1^{(n)})-a^{(n)} b^{(n-1)} \exp(\phi^{(n-1)}) \right) } 
{p^{(n)}( {p^{(n)}}^2 -1) } \quad\quad ({p^{(n)}}^2-1=\frac{n}{\mu}a^{(n)}b^{(n)}p^{(n)})\\
&&\quad = \mu a^{(i+1)}  \exp(\phi_1^{(i)})-  \frac{\mu a^{(i)} b^{(i-1)} \exp(\phi^{(i-1)}) } { b^{(i)} }
 + \frac{n {a^{(i)}}^2 b^{(i-1)} \exp(\phi^{(i-1)})}{p^{(i)}}
\\
&&\qquad\ \ -\frac{n {a^{(i)}} a^{(1)} b^{(n)} \exp(\phi^{(n)})}{p^{(1)}} -\frac{\mu a^{(i)} a^{(1)} \exp(\phi^{(n)})}{p^{(1)} a^{(n)} p^{(n)}} 
+\frac{\mu a^{(i)} b^{(n-1)} \exp(\phi^{(n-1)}) } 
{{p^{(n)}}^2 b^{(n)}} \\
&&\quad = \mu a^{(i+1)}  \exp(\phi_1^{(i)}) \!- \! \frac{\mu a^{(i)} b^{(i-1)} \exp(\phi^{(i-1)}) } { b^{(i)} {p^{(i)}}^2 } 
\!-\! \frac{\mu a^{(i)} a^{(1)} p^{(n)} \exp(\phi^{(n)})}{p^{(1)} a^{(n)} } 
+\frac{\mu a^{(i)} b^{(n-1)} \exp(\phi^{(n-1)}) } 
{{p^{(n)}}^2 b^{(n)}}
\end{eqnarray*}
and
\begin{eqnarray*}
&&b_x^{(j)}=b^{(j)}\left(\frac{a_x^{(j)}}{ a^{(j)}}+\sum_{r=j}^{n-1}\frac{2 p_x^{(r)}}{p^{(r)}}
-\frac{p_x^{(j)}}{p^{(j)}} +\frac{2 p^{(n)} p_x^{(n)} } { {p^{(n)}}^2 -1 }\right)\\
&&\quad =\frac{ \mu a^{(j+1)} b^{(j)} \exp(\phi_1^{(j)}) }{ a^{(j)}}
 \!- \! \frac{\mu  b^{(j-1)} \exp(\phi^{(j-1)}) } {  {p^{(j)}}^2 } 
\!-\! \frac{\mu b^{(j)} a^{(1)} p^{(n)} \exp(\phi^{(n)})}{p^{(1)} a^{(n)} } 
\\
&&\qquad\ \ 
+\frac{\mu b^{(j)} b^{(n-1)} \exp(\phi^{(n-1)}) } 
{{p^{(n)}}^2 b^{(n)}} 
 +\frac{2 n {b^{(j)}} a^{(n)} b^{(n-1)} \exp(\phi^{(n-1)})}{p^{(n)}} 
-\frac{ n {b^{(j)}} a^{(j)} b^{(j-1)} \exp(\phi^{(j-1)})}{p^{(j)}}
\\&&\qquad\ \
-\frac{ n {b^{(j)}}^2 a^{(j+1)}  \exp(\phi_1^{(j)})}{p^{(j)}}
+\frac{2 \mu b^{(j)} a^{(1)} \exp(\phi_1^{(n)})}{ a^{(n)}} 
-\frac{2 \mu b^{(j)} b^{(n-1)} \exp(\phi^{(n-1)}) } 
{ b^{(n)}}\\
&&\quad =\frac{ \mu a^{(j+1)} b^{(j)} \exp(\phi_1^{(j)}) }{ a^{(j)} {p^{(j)}}^2}
 \!- \! \mu  b^{(j-1)} \exp(\phi^{(j-1)}) 
\!+\! \frac{\mu b^{(j)} a^{(1)} p^{(n)} \exp(\phi^{(n)})}{p^{(1)} a^{(n)} } 
-\frac{\mu b^{(j)} b^{(n-1)} \exp(\phi^{(n-1)}) } 
{{p^{(n)}}^2 b^{(n)}} .
\end{eqnarray*}
Notice that $\mu^2 a^{(i+1)} b^{(i)}= a^{(i)} b^{(i+1)}  p^{(i)} p^{(i+1)}$ following from (\ref{coar}). Therefore, we have 
\begin{eqnarray*}
&&\frac{\partial (a^{(i)} b^{(j)})}{\partial x}=a_x^{(i)} 
b^{(j)} + a^{(i)} b_x^{(j)}\\
&=&\mu \left(\exp(\phi_1^{(i)}) a^{(i+1)}b^{(j)}-a^{(i)}b^{(j-1)} \exp(\phi^{(j-1)})\right)
-\frac{1}{\mu } \left(\exp(\phi_1^{(i-1)}) a^{(i-1)}b^{(j)}-a^{(i)}b^{(j+1)} \exp(\phi^{(j)})\right),
\end{eqnarray*}
which is the $(i,j)$ entry of the right-hand side of (\ref{non4}) and thus we prove the statement.
\end{proof}
\begin{Pro}\label{proB}
For the matrix $M_{\mu}$ satisfying Theorem \ref{cor1}, the identity (\ref{At}) holds if $p^{(i)}$ satisfy (\ref{eqps}).
\end{Pro}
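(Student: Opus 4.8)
The plan is to verify (\ref{At}) entrywise, following exactly the template of Proposition \ref{proA}. Since $A=a><b$ is rank one, the left-hand side is $A_\tau=a_\tau><b+a><b_\tau$, whose $(k,l)$ entry is $a_\tau^{(k)}b^{(l)}+a^{(k)}b_\tau^{(l)}$. First I would express $a_\tau^{(k)}$ and $b_\tau^{(l)}$ through Lemma \ref{lem3}, which writes both in terms of the logarithmic derivatives $p_\tau^{(r)}/p^{(r)}$, and then substitute the flow (\ref{eqps}) in the form $p_\tau^{(r)}/p^{(r)}=\frac{n}{\mu}(\cS-1)\frac{b^{(r)}a_{-1}^{(r)}}{<b,\,a_{-1}>}$. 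This turns the whole left-hand side into a rational expression in $a^{(i)},b^{(i)},a_{-1}^{(i)},b_1^{(i)},p^{(i)}$ and the two inner products $<b_1,\,a>$ and $<b,\,a_{-1}>$, exactly as the quantities $a_x,b_x$ were produced in the proof of Proposition \ref{proA}.

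For the right-hand side I would reduce every matrix product to scalar sums, exploiting that $A$ and $B$ are rank one. For instance $B_1Q^jAQ^{-j}$ factors as the rank-one matrix $a><b$ multiplied by the scalar $\frac{1}{<b_1,\,a>}\sum_m \omega^{j(m-l)}b_1^{(m)}a^{(m)}$ at entry $(k,l)$, and $AQ^jBQ^{-j}$ factors analogously with $\sum_m\omega^{j(m-l)}b^{(m)}a_{-1}^{(m)}$. Summing these against the coefficients $\frac{1}{\mu(1-\omega^j)}$ and, for the transposed block $AQ^{-j}B^{\T}Q^j-Q^{-j}B_1^{\T}Q^jA$, against $\frac{\mu}{\omega^j-\mu^2}$, collapses the $j$-sums by Lemma \ref{lem2} into the closed coefficients $\Gamma_{m-l}$ and $\gamma_{m-l}(\mu^2)$ respectively. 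After this reduction each $(k,l)$ entry of the right-hand side becomes an explicit bilinear combination of $a^{(k)},b^{(l)}$ and the shifted vectors weighted by $\Gamma$ and $\gamma$, plus the elementary pieces $(B_1C)_{kl}=\frac{a^{(k)}b_1^{(l)}p^{(l)}}{<b_1,\,a>}$ and $(CB)_{kl}=\frac{p^{(k)}a_{-1}^{(k)}b^{(l)}}{<b,\,a_{-1}>}$.

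The final step is to match the two sides, which is where I would lean on the constraints of Theorem \ref{cor1}, in particular (\ref{reab}), i.e. ${p^{(i)}}^2-1=\frac{n}{\mu}a^{(i)}b^{(i)}p^{(i)}$, and the consequence $\mu^2 a^{(i+1)}b^{(i)}=a^{(i)}b^{(i+1)}p^{(i)}p^{(i+1)}$ of (\ref{coar}). These relations let me rewrite the $\Gamma$- and $\gamma$-weighted sums so that they cancel against the contributions coming from the partial sums $\sum_{r=1}^{i-1}p_\tau^{(r)}/p^{(r)}$ and the $n$-th-index terms produced by Lemma \ref{lem3}. As in Proposition \ref{proA}, I expect the bulk of the terms to cancel pairwise once the relations among $a,b,p$ are inserted, leaving the stated identity for each $(k,l)$.

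The main obstacle is combinatorial bookkeeping rather than any conceptual point: the sums $\sum_{r=1}^{i-1}p_\tau^{(r)}/p^{(r)}$ in Lemma \ref{lem3} do \emph{not} telescope, because the shift $\cS$ acts on the lattice index while the summation runs over the periodic upper index, so one must carry the non-telescoping remainder through to the end; and simultaneously the mod-$n$ conventions in $\Gamma_l$ and $\gamma_l(\mu^2)$ must be kept consistent across the $\frac{1}{1-\omega^j}$ and $\frac{1}{\omega^j-\mu^2}$ sums. The verification is therefore long but entirely mechanical, and structurally identical to the check of (\ref{non4}) carried out in Proposition \ref{proA}.
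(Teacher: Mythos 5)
Your plan is correct and follows essentially the same route as the paper's own proof: compute the matrix entries of the right-hand side of (\ref{At}) by exploiting the rank-one structure and collapsing the $j$-sums via Lemma \ref{lem2}, then compute $a_\tau$, $b_\tau$ from Lemma \ref{lem3} together with (\ref{eqps}), and match using the relations of Theorem \ref{cor1} (the paper invokes (\ref{reca}) and (\ref{reab}) at exactly the points you indicate). Your remark that the partial sums $\sum_{r=1}^{i-1}p_\tau^{(r)}/p^{(r)}$ do not telescope and must be carried through is also consistent with how the cancellation actually works out in the paper's displayed formulas.
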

\begin{proof} We write out the $(l,k)$ entry of the right-hand side of (\ref{At}), simplify it using Lemma \ref{lem2} and (\ref{reca}) and obtain
\begin{eqnarray}
&&\frac{ a^{(l)} b_1^{(k)} p^{(k)} }{<b_1,\ a>}-\frac{ p^{(l)} a_{-1}^{(l)} b^{(k)} }{<b,\ a_{-1}>}
-\sum_{r=1}^n  \sum_{j=1}^{n-1} \frac{ a^{(l)} b^{(k)} (\omega^{(r-k)j}+\omega^{(l-r)j} )}{\mu (1-\omega^{j})}
\left(\frac{  b_1^{(r)} a^{(r)} }{<b_1,\ a>}-\frac{  b^{(r)} a_{-1}^{(r)} }{ <b,\ a_{-1}>}\right)\nonumber\\
&&+\sum_{r=1}^n \sum_{j=0}^{n-1} \left( \frac{\mu  a^{(l)} \omega^{(k-r)j} {b^{(r)}}^2 a_{-1}^{(k)}}{(\omega^{j}-\mu^2) <b,\ a_{-1}>}
- \frac{\mu  b_1^{(l)} \omega^{(r-l)j} {a^{(r)}}^2 b^{(k)}}{(\omega^{j}-\mu^2) <b_1,\ a>}\right)\nonumber\\
&=&\frac{ a^{(l)} b_1^{(k)} p^{(k)} }{<b_1,\ a>}-\frac{ p^{(l)} a_{-1}^{(l)} b^{(k)} }{<b,\ a_{-1}>}
- \frac{ a^{(l)} b^{(k)}}{\mu} n  \left(l-k-1 \right) (\cS-1)\frac{  b^{(r)}  a_{-1}^{(r)} }{ <b,\ a_{-1}>}\nonumber\\
&&+\frac{p^{(l)} b^{(k)} b_1^{(l)} a^{(l)} } {b^{(l)} <b_1,\ a> } -\frac{a^{(l)} p^{(k)} b^{(k)} a_{-1}^{(k)}} {a^{(k)} <b,\ a_{-1}>}\label{left}
\end{eqnarray}
To compute the left-hand side of (\ref{At}) for the corresponding entry, we first compute $a_{\tau}$ and $b_{\tau}$ using (\ref{eqps}). 
Substituting it into (\ref{atau}) and (\ref{btau}) in Lemma \ref{lem3}, we have
\begin{eqnarray}\label{attau}
&&a_{\tau}^{(i)}
=\frac{ n a^{(i)}}{\mu}\left(\sum_{r=1}^{i-1} (\cS-1) \frac{ b^{(r)} a_{-1}^{(r)}}{<b,\ a_{-1}>}
-\frac{1} { {p^{(n)}}^2 -1} (\cS-1) \frac{ b^{(n)} a_{-1}^{(n)}}{<b,\ a_{-1}>}\right)
+\frac{p^{(i)} } {b^{(i)} }  (\cS-1) \frac{ b^{(i)} a_{-1}^{(i)}}{<b,\ a_{-1}>}
\end{eqnarray}
and 
\begin{eqnarray}
&&b_{\tau}^{(i)}
=\frac{ n b^{(i)}}{\mu}
\left(\sum_{r=1}^{i-1} (\cS-1) \frac{ b^{(r)} a_{-1}^{(r)}}{<b,\ a_{-1}>}
+\frac{1} { {p^{(n)}}^2 -1} (\cS-1) \frac{ b^{(n)} a_{-1}^{(n)}}{<b,\ a_{-1}>}\right)
+\frac{p^{(i)} } {a^{(i)} }  (\cS-1) \frac{ b^{(i)} a_{-1}^{(i)}}{<b,\ a_{-1}>} \nonumber\\
&&\quad +\frac{ n b^{(i)}}{\mu}
\left(\sum_{r=i}^{n} (\cS-1) \frac{2 b^{(r)} a_{-1}^{(r)}}{<b,\ a_{-1}>}-(\cS-1) \frac{ b^{(i)} a_{-1}^{(i)}}{<b,\ a_{-1}>}
\right)\nonumber\\
&&\quad =\frac{ n b^{(i)}}{\mu}
\left(\sum_{r=1}^{i} -(\cS-1) \frac{ b^{(r)} a_{-1}^{(r)}}{<b,\ a_{-1}>}
+\frac{1} { {p^{(n)}}^2 -1} (\cS-1) \frac{ b^{(n)} a_{-1}^{(n)}}{<b,\ a_{-1}>}\right)
+\frac{p^{(i)} } {a^{(i)} }  (\cS-1) \frac{ b^{(i)} a_{-1}^{(i)}}{<b,\ a_{-1}>} .\label{bttau}
\end{eqnarray}
Thus,
\begin{eqnarray}
&& a_{\tau}^{(l)}b^{(k)}+ a^{(l)} b_{\tau}^{(k)}=
\frac{ n a^{(l)} b^{(k)} }{\mu}
\left(\sum_{r=1}^{l-1} -\sum_{r=1}^{k}\right)(\cS-1) \frac{ b^{(r)} a_{-1}^{(r)}}{<b,\ a_{-1}>}\nonumber\\
&&\qquad+\frac{p^{(l)} b^{(k)} b_1^{(l)} a^{(l)} } {b^{(l)} <b_1,\ a> }  
-\frac{p^{(l)} b^{(k)}  a_{-1}^{(l)}}{<b,\ a_{-1}>}
+\frac{a^{(l)} p^{(k)}  b_1^{(k)} }{<b_1,\ a>}
-\frac{a^{(l)} p^{(k)} b^{(k)} a_{-1}^{(k)}} {a^{(k)} <b,\ a_{-1}>},\label{atbt}
\end{eqnarray}
which equals to (\ref{left}) and we complete the proof.
\end{proof}
The following identity will be used in the proof of Theorem \ref{thm4}.
\begin{Pro}\label{proC}
Given the relations (\ref{consb}), (\ref{coar}), (\ref{indc}) and (\ref{eqps}), the following identity holds:
\begin{eqnarray}
 (\cS-1)\frac{1}{<b,\ a_{-1}>}(a_{-1}^{(i)} b^{(i+1)} + \mu^2 a_{-1}^{(i+1)} b^{(i)}) \exp(\phi^{(i)})
 =\frac{\partial}{\partial \tau} a^{(i)} b^{(i+1)} p^{(i)}\exp(\phi^{(i)}).\label{comm1}
\end{eqnarray}
\end{Pro}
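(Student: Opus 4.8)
The plan is to expand the $\tau$-derivative on the right of (\ref{comm1}) by the Leibniz rule and then substitute the known flows of each factor. Writing the right-hand side as $\partial_\tau\bigl(a^{(i)}b^{(i+1)}p^{(i)}\exp(\phi^{(i)})\bigr)$, I would group the four resulting terms as
\[
\bigl(a_\tau^{(i)}b^{(i+1)}+a^{(i)}b_\tau^{(i+1)}\bigr)p^{(i)}\exp(\phi^{(i)})
+a^{(i)}b^{(i+1)}\bigl(p_\tau^{(i)}+p^{(i)}\phi_\tau^{(i)}\bigr)\exp(\phi^{(i)}).
\]
Into the second group I would insert $p_\tau^{(i)}$ from (\ref{eqps}) and $\phi_\tau^{(i)}$ from (\ref{flowt}); since $<b,\ a_{-1}>$ is invariant under the periodic shift $\psh$, the bracket collapses neatly to $\frac{n}{\mu}p^{(i)}(\cS-\psh)\frac{b^{(i)}a_{-1}^{(i)}}{<b,\ a_{-1}>}$.

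The key observation for the first group is that $a_\tau^{(i)}b^{(i+1)}+a^{(i)}b_\tau^{(i+1)}$ is exactly the quantity already computed in (\ref{atbt}) during the proof of Proposition \ref{proB}, specialised to $l=i$, $k=i+1$ (equivalently, one substitutes (\ref{attau}) and (\ref{bttau}) of Lemma \ref{lem3}). For this choice the telescoping operator $\bigl(\sum_{r=1}^{l-1}-\sum_{r=1}^{k}\bigr)$ reduces to $-\sum_{r=i}^{i+1}$, so only two terms of the summation survive and the remaining contributions of (\ref{atbt}) are already local in $a,b,p$ and their single $\cS$-shifts. Combining these surviving summation terms with the $(\cS-\psh)$ term produced by $p_\tau^{(i)}+p^{(i)}\phi_\tau^{(i)}$, the genuinely non-local pieces cancel, leaving only the boundary expressions $\frac{b^{(i)}a_{-1}^{(i)}}{<b,\ a_{-1}>}$ and $\cS\bigl(\frac{b^{(i+1)}a_{-1}^{(i+1)}}{<b,\ a_{-1}>}\bigr)$.

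The remainder of the argument is purely algebraic. I would eliminate the factors $\frac{n}{\mu}a^{(i)}b^{(i)}p^{(i)}$ in favour of ${p^{(i)}}^2-1$ using (\ref{reab}), convert each cross product $\mu^2 a^{(i+1)}b^{(i)}$ into $a^{(i)}b^{(i+1)}p^{(i)}p^{(i+1)}$ using (\ref{coar}), and rewrite every shifted exponential using $\exp(\phi_1^{(i)})/p^{(i)}=\exp(\phi^{(i)})/p^{(i+1)}$ from (\ref{indc}). Sorting the outcome into terms carrying an $\cS$-shift (collected over $<b_1,\ a>=\cS<b,\ a_{-1}>$) and unshifted terms (collected over $<b,\ a_{-1}>$) should reproduce the expanded left-hand side
\[
\frac{(a^{(i)}b_1^{(i+1)}+\mu^2 a^{(i+1)}b_1^{(i)})\exp(\phi_1^{(i)})}{<b_1,\ a>}
-\frac{(a_{-1}^{(i)}b^{(i+1)}+\mu^2 a_{-1}^{(i+1)}b^{(i)})\exp(\phi^{(i)})}{<b,\ a_{-1}>},
\]
which establishes (\ref{comm1}).

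I expect the main obstacle to be the bookkeeping in this last reduction: one must verify that the two $\frac{n}{\mu}$-weighted contributions, namely the surviving telescoping terms from (\ref{atbt}) and the $(\cS-\psh)$ term from $p_\tau^{(i)}+p^{(i)}\phi_\tau^{(i)}$, cancel precisely, so that only the $\mu^2$-weighted cross terms demanded by the left-hand side remain, and that the residual boundary terms regroup correctly into the $<b_1,\ a>$ and $<b,\ a_{-1}>$ denominators. It is precisely the interplay between the Darboux shift $\cS$ and the periodic shift $\psh$ that drives this cancellation, and keeping the two shifts separate throughout is what makes the calculation tractable.
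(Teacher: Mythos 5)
Your plan is correct and follows essentially the same route as the paper's proof: Leibniz on the right-hand side, the identity (\ref{atbt}) specialised to $l=i$, $k=i+1$ for the $a_\tau^{(i)}b^{(i+1)}+a^{(i)}b_\tau^{(i+1)}$ block, (\ref{eqps}) and (\ref{flowt}) for $p_\tau^{(i)}+p^{(i)}\phi_\tau^{(i)}$, and then (\ref{reab}), the cross-product relation from (\ref{coar}) and the exponential relation from (\ref{indc}) to regroup everything over $<b_1,\ a>$ and $<b,\ a_{-1}>$. The final bookkeeping you defer does close exactly as you predict, so nothing essential is missing.
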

\begin{proof} We prove the identity by direct calculation. The left-hand side equals
 \begin{eqnarray*}
&&\frac{ a^{(i)} b_1^{(i+1)} + \mu^2 a^{(i+1} b_1^{(i)})}{<b_1,\ a>} \exp(\phi_1^{(i)})-\frac{a_{-1}^{(i)} b^{(i+1)} + \mu^2 a_{-1}^{(i+1)} b^{(i)}}{<b,\ a_{-1}>}
\exp(\phi^{(i)})\\
&=& \left( \frac{a^{(i)} b_1^{(i+1)} + \mu^2 a^{(i+1)} b_1^{(i)}}{<b_1,\ a>}\frac{p^{(i)}}{p^{(i+1)}}-
\frac{a_{-1}^{(i)} b^{(i+1)} +\mu^2 a_{-1}^{(i+1)} b^{(i)}}{<b,\ a_{-1}>}\right) \exp(\phi^{(i)}) 
\quad (\frac{\exp(\phi_1^{(i)})}{p^{(i)}}=\frac{\exp(\phi^{(i)})}{p^{(i+1)}})\\
&=&\left( \frac{a^{(i)} p^{(i)}}{a^{(i+1)} p^{(i+1)}}\frac{ a^{(i+1)} b_1^{(i+1)}}{<b_1,\ a>} +  \frac{ b^{(i+1)}  {p^{(i)}}^2}{b^{(i)}} 
\frac{a^{(i)} b_1^{(i)}}{<b_1,\ a>}\right) \exp(\phi^{(i)}) \\
&& -\left(
\frac{ b^{(i+1)}}{b^{(i)}}\frac{a_{-1}^{(i)} b^{(i)}}{<b,\ a_{-1}>} +\frac{a^{(i)} p^{(i)} p^{(i+1)}}{a^{(i+1)} } \frac{ a_{-1}^{(i+1)} b^{(i+1)}}{<b,\ a_{-1}>}\right) \exp(\phi^{(i)}) 
\quad (\mu^2 a^{(i+1)} b^{(i)}= a^{(i)} b^{(i+1)}  p^{(i)} p^{(i+1)}).
 \end{eqnarray*}
We now compute the right-hand side of (\ref{comm1}). First using formula (\ref{atbt}), we have
\begin{eqnarray*}
&& a_{\tau}^{(i)}b^{(i+1)}+ a^{(i)} b_{\tau}^{(i+1)}=\frac{ b^{(i+1)}  } {b^{(i)} p^{(i)} } (\cS-1)
\frac{ b^{(i)}  a_{-1}^{(i)}}{<b,\ a_{-1}>}
+\frac{a^{(i)}} {a^{(i+1)}  p^{(i+1)}}  (\cS-1)
\frac{ b^{(i+1)} a_{-1}^{(i+1)}}{<b,\ a_{-1}>} .
\end{eqnarray*}
So the right-hand side of (\ref{comm1}) is equal to
\begin{eqnarray*}
&&\left(a_{\tau}^{(i)} b^{(i+1)} p^{(i)}+a^{(i)} b_{\tau}^{(i+1)} p^{(i)}+a^{(i)} b^{(i+1)} p_{\tau}^{(i)}
+n a^{(i)} b^{(i+1)} p^{(i)} \phi_{\tau}^{(i)}\right) \exp(\phi^{(i)})\\
&&=\frac{ b^{(i+1)}  } {b^{(i)} } \exp(\phi^{(i)}) (\cS-1)
\frac{ b^{(i)}  a_{-1}^{(i)}}{<b,\ a_{-1}>}
+\frac{a^{(i)} p^{(i)}} {a^{(i+1)}  p^{(i+1)}} \exp(\phi^{(i)}) (\cS-1)
\frac{ b^{(i+1)} a_{-1}^{(i+1)}}{<b,\ a_{-1}>}\\
&&+\frac{ n a^{(i)} b^{(i+1)} p^{(i)} }{\mu}  \exp(\phi^{(i)}) \left(
\frac{ b_1^{(i)} a^{(i)}}{<b_1,\ a>}  -\frac{ b^{(i+1)} a_{-1}^{(i+1)}}{<b,\ a_{-1}>}\right),
\end{eqnarray*}
which is the same as the left-hand side after we use (\ref{reab}) and substitute $\frac{n}{\mu} a^{(i)}  p^{(i)}=\frac{{p^{(i)}}^2 -1}{b^{(i)}}$ for the term containing $b_1$ and
$\frac{n}{\mu} b^{(i+1)} =\frac{{p^{(i+1)}}^2 -1}{a^{(i+1)}  p^{(i+1)}}$ for the term containing $a_{-1}$.
\end{proof}

\section*{Acknowledgements}
The paper is supported by AVM's EPSRC grant EP/I038675/1 and JPW's EPSRC grant EP/I038659/1. All authors gratefully acknowledge the financial support.

\end{document}